\def\HiLi{\leavevmode\rlap{\hbox to \hsize{\color{lightgray!50}\leaders\hrule height .8\baselineskip depth .5ex\hfill}}}
\DeclarePairedDelimiter\floor{\lfloor}{\rfloor}
\lstdefinestyle{mystyle}{
    basicstyle=\ttfamily\tiny,
    showspaces=false,
    showstringspaces=false,
    showtabs=false,
    tabsize=2
}
\newtheorem{theorem}{Theorem}
\newcommand{\hyaline}{Hyaline}
\begin{document}

%% Title information
\title[Hyaline]{Snapshot-Free, Transparent, and Robust Memory Reclamation for Lock-Free Data Structures}

\author{Ruslan Nikolaev}
\email{rnikola@vt.edu}
\affiliation{%
  \department{Department of Electrical and Computer Engineering}
  \institution{Virginia Tech}
  \city{Blacksburg}
  \state{VA}
  \country{USA}
}

\author{Binoy Ravindran}
\email{binoy@vt.edu}
\affiliation{%
  \department{Department of Electrical and Computer Engineering}
  \institution{Virginia Tech}
  \city{Blacksburg}
  \state{VA}
  \country{USA}
}

\begin{abstract}
We present a family of safe memory reclamation schemes, Hyaline, which are fast, scalable, and transparent to the underlying lock-free data structures. Hyaline is based on reference counting -- considered impractical for memory reclamation in the past due to high overheads. Hyaline uses reference counters only during reclamation, but not while accessing individual objects, which reduces overheads for object accesses. Since with reference counters, an arbitrary thread ends up freeing memory, Hyaline's reclamation workload is (almost) balanced across all threads, unlike most prior reclamation schemes such as epoch-based reclamation (EBR) or hazard pointers (HP). Hyaline often yields (excellent) EBR-grade performance with (good) HP-grade memory efficiency, which is a challenging trade-off with all existing schemes.

Hyaline schemes offer: (i) high \emph{performance}; (ii) good memory \emph{efficiency}; (iii) \emph{robustness}: bounding memory usage even in the presence of stalled threads, a well-known problem with EBR; (iv) \emph{transparency}: supporting virtually unbounded number of threads (or concurrent entities) that can be created and deleted dynamically, and effortlessly join existent workload; (v) \emph{autonomy}: avoiding special OS mechanisms and being non-intrusive to runtime or compiler environments; (vi) \emph{simplicity}: enabling easy integration into unmanaged C/C++ code;
and (vii) \emph{generality}: supporting many data structures.
All existing schemes lack one or more properties.

We have implemented and tested Hyaline on x86(-64), ARM32/64, PowerPC, and MIPS. The general approach requires LL/SC or double-width CAS, while a specialized version also works with single-width CAS. Our  evaluation reveals that Hyaline's throughput is very high --  it steadily outperforms EBR by 10\% in one test and yields \textbf{2x} gains in oversubscribed scenarios. Hyaline's superior memory efficiency is especially evident in read-dominated workloads.

\end{abstract}

\begin{CCSXML}
<ccs2012>
<concept>
<concept_id>10003752.10003809.10011778</concept_id>
<concept_desc>Theory of computation~Concurrent algorithms</concept_desc>
<concept_significance>500</concept_significance>
</concept>
</ccs2012>
\end{CCSXML}

\ccsdesc[500]{Theory of computation~Concurrent algorithms}

%% Keywords
%% comma separated list
\keywords{lock-free, non-blocking, memory reclamation, hazard pointers, epoch-based reclamation}

%% \maketitle
%% Note: \maketitle command must come after title commands, author
%% commands, abstract environment, Computing Classification System
%% environment and commands, and keywords command.
\maketitle

\algnewcommand{\algorithmicgoto}{\textbf{goto}}%
\algnewcommand{\Goto}[1]{\algorithmicgoto~\ref{#1}}%
\algdef{SE}[DOWHILE]{Do}{doWhile}{\algorithmicdo}[1]{\algorithmicwhile\ #1}%
\algnewcommand\Not{\textbf{not}\xspace}
\algnewcommand\AndOp{\textbf{and}\xspace}
\algnewcommand\ModOp{\textbf{mod}\xspace}
\algnewcommand\OrOp{\textbf{or}\xspace}

\algnewcommand{\LineComment}[1]{\State \(\triangleright\) #1}

\SetKwIF{If}{ElseIf}{Else}{if (}{)}{else if}{else}{endif}

\SetKwRepeat{Do}{do}{while}
\SetKwProg{Fn}{}{}{}

\section{Introduction}

Modern computer systems increasingly rely on parallelism.
Programming paradigms are also changing accordingly:
the use of scalable non-blocking data structures is preferred
to more traditional lock-based approaches.

Aside from general memory allocation and reclamation problems,
non-blocking data structures also present a number of unique challenges that do not
manifest in lock-based programming. One of the most fundamental problems
for lock-free data structures that use dynamic memory allocation is that
memory objects need to be \emph{safely} deallocated. The problem arises when
one thread wants to deallocate a memory object, but concurrent threads
still have stale pointers and are unaware of ongoing memory deallocation. Garbage collectors avoid this problem
by deferring the deallocation until no thread has pointers to
the deallocated memory object. However, \emph{fully} lock-free
garbage collectors are challenging to design, especially with consistent
and limited overheads.

Moreover, it is often impractical to use garbage collectors in languages that are designed for unmanaged code such as C and C++.
To support concurrent data structures in unmanaged languages, a number of techniques have been developed for safe memory reclamation (or SMR).
Many existing approaches for SMR originate from, or improve upon, epoch-based reclamation (EBR)~\cite{epoch1,epoch2} and hazard pointers (HP)~\cite{hazardPointers}. Unlike garbage collectors, these schemes do
not \emph{automatically} determine when memory becomes free. Instead, such
schemes are predicated on \emph{user-specified retire} statements, which are roughly analogous to \emph{free}, with the only difference being that \emph{retire} does not necessarily deallocate memory right away.

EBR uses a simple API and achieves good performance, but lacks protection against stalled threads. This can prevent timely reclamation, resulting in blocking behavior due to memory exhaustion.
HP does not suffer from this problem, but is harder to use and slower in practice.
Some of the other SMR algorithms~\cite{threadScan,forkScan,Balmau:2016:FRM:2935764.2935790,debra,NBR} rely on special operating system (OS) abstractions, which make them difficult to use in certain settings such as within OS kernels or
platform-independent code.
In general, all SMR schemes
have different trade-offs in terms of API simplicity,
throughput, average memory efficiency, and protection against stalled threads.

Although a number of existing SMR schemes~\cite{epoch1,IntervalBased,hazardEra} achieve excellent throughput, their memory efficiency is limited. An implicit assumption of these algorithms is that all threads get more or less \textit{even} shares of memory objects to reclaim. In most existing SMR schemes, a thread that detaches an object from a data structure must eventually reclaim it. This can cause an unbalanced reclamation workload, especially in read-dominated scenarios, where most threads are reading and only a fraction of them modifies data (see examples in Section~\ref{sec:eval}). When most threads are reading and are therefore not deallocating, the reclamation parallelism is reduced, which degrades memory efficiency. To make matters worse, threads also need to periodically peruse their local lists of not-yet-reclaimed objects to check if an object can be safely reclaimed (as in HP) or
check the status of all threads to advance an epoch (as in EBR). The reclamation workload that is skewed toward the writer threads and consequent delayed reclamation can eventually degrade performance (see Section~\ref{sec:eval}). Such performance degradation becomes even more evident in oversubscribed scenarios where there are more threads than cores available. Note that oversubscription is not that uncommon in practice (e.g., consider Go, Erlang, and proposed C++23 concurrency constructs).

Lock-free reference counting (LFRC)~\cite{refCount,Valois:1995:LLL:224964.224988}, another SMR discipline, enables better parallelism in theory: a thread with the last reference frees an object, which often means that an \textit{arbitrary} thread ends up freeing memory. Unfortunately, LFRC typically performs poorly since every object access, even just for reading, requires memory writes and barriers.

In this paper, we revisit reference counting -- considered impractical for concurrent algorithms in the past -- and design an SMR scheme called \hyaline{} (Sections~\ref{sec:hyaline} and~\ref{sec:algorithms}). The key idea of \hyaline{} is to actively use reference counters only during reclamation, but not while accessing individual objects. This reduces overheads for object accesses, while ensuring that the reclamation workload is balanced across all threads, yielding excellent performance as well as excellent memory efficiency. We establish \hyaline{}'s core properties including reclamation safety, lock-freedom, reclamation cost bounds, and robustness (Section~\ref{sec:correctness}).

Hyaline also has a number of other important properties. Unlike most SMR algorithms, which typically require \textit{globally visible}, private, per-thread state (either in static arrays or in dynamically managed lists), \hyaline{} supports virtually unbounded number of threads using a relatively small (fixed) number of shared \textit{slots}, entities that can be shared by multiple threads. Since \hyaline{}'s reclamation is asynchronous (i.e, any thread can free memory allocated by any thread), threads can immediately be recycled without worrying about the fate of its previously deleted but not-yet-freed objects. These two properties ensure that \hyaline{} is less intrusive to applications, enabling its greater \textit{transparency}. \hyaline{} is also well suited for preemptive environments where the number of threads substantially exceeds the number of cores and can change dynamically such as in OS kernels\footnote{Specifically, this is useful for global data structures within OS kernels that support kernel-mode preemption, e.g., Linux. We also have preliminary results with experimental OS designs~\cite{LibrettOS}.}
and server applications with per-client threads (or fibers).

We have implemented and evaluated \hyaline{} on a variety of architectures including  x86(-64), ARM32/64, PowerPC, and MIPS  (Section~\ref{sec:eval}). Our experimental results reveal that, in a number of cases, \hyaline{} demonstrates both excellent throughput as well as excellent memory efficiency, which is difficult for many past SMR schemes to achieve together. \hyaline{}'s substantial throughput gains are also evident: in the Bonsai Tree benchmark, \hyaline{}'s steady gains over EBR, one of the fastest schemes, are $\approx$10\%. In oversubscribed scenarios, \hyaline{} particularly shines: up to \textbf{2x} throughput gains for high-throughput data structures.

We also present an extension of \hyaline{}, called \hyaline{}-S, to deal with stalled threads (Section~\ref{sec:algorithms}). Similar to EBR, basic \hyaline{}'s memory usage can become unbounded if some threads are stalled. We partially adopt the \emph{birth eras} idea, inspired by similar usage in interval-based reclamation (IBR)~\cite{IntervalBased} and hazard eras (HE)~\cite{hazardEra}, and demonstrate how this idea helps to deal with stalled threads in \hyaline{}-S.

The paper's research contribution is the \hyaline{} algorithm and its variants, which are the first SMR schemes that achieve excellent performance, memory efficiency, and other aforementioned properties over a broad range of workloads including read-dominated and oversubscribed scenarios.

\section{Background}
\label{sec:background}

For greater clarity and completeness, we discuss properties and challenges of the existent memory reclamation schemes.

\subsubsection*{Read-Modify-Write}

Lock-free algorithms typically use read-modify-write (RMW)
operations, which atomically read a memory variable, perform some operation
on it, and write back the result.
Modern CPUs implement RMWs via
compare-and-swap (CAS) or a pair of
load-link (LL)/store-conditional (SC) instructions. For better scalability,
some CPUs~\cite{INTEL}
support specialized fetch-and-add (FAA) and \textit{swap} operations.
Also, x86-64 and ARM64 support operations on two
adjacent CPU words (double-width RMW) via the
{\tt cmpxchg16b}~\cite{INTEL} and {\tt ldaxp}/{\tt stlxp}~\cite{ARM:manual}
instructions, respectively.

\hyaline{} requires either double-width CAS, or ordinary LL/SC (see Appendix~\ref{sec:llsc} regarding LL/SC and PowerPC) since a reference counter needs to be coupled with a pointer in some places.
In rare cases, such as in SPARC~\cite{SPARC}, where neither is supported, reference counters can be squeezed with pointers. (SPARC uses
54-bit virtual addresses; 48-bit cache-line aligned pointers where lower 6 bits are 0s can be squeezed with 16-bit counters.) We also present a specialized \hyaline{}-1 version for single-width CAS.

\subsubsection*{API Model}

We focus on the SMR problem in unmanaged code environments such as C/C++. \hyaline{}'s and \hyaline{}-1's basic programming model is similar to that of EBR~\cite{epoch1} and is defined as follows. Memory objects are allocated
using standard OS-defined means. They additionally incorporate
SMR-related headers and are initialized appropriately. Once memory
objects appear in a lock-free data structure, they must be reclaimed
using a two-step procedure. After deleting a pointer from the data structure,
a memory object needs to be \emph{retired}. A memory object is returned to the OS only after the object becomes unreachable by any other concurrent thread.
All data structure operations must be encapsulated between \textit{enter}
and \textit{leave} calls that trigger the use of SMR.

\subsubsection*{Robustness}

One of the biggest downsides of the simplistic API
model described above is that memory usage becomes unbounded in the
presence of stalled threads. We call an algorithm \emph{robust} if
it bounds memory usage for stalled threads.\footnote{Sometimes, this property is also called ``lock-freedom'' (memory-wise). Since memory is finite,
stalled threads prevent other threads from allocating memory when memory is exhausted.
We use the terminology from~\cite{IntervalBased,Balmau:2016:FRM:2935764.2935790,pageFaultRec} to capture this property more precisely.}

\begin{table*}[htbp]
\caption{Comparison of \hyaline{} with existing SMR approaches.}
\begin{center}
\resizebox{.89\textwidth}{!}{%
	\begin{tabular}{ l l l l l l l }
    \toprule
			\textbf{Scheme} & \textbf{Based on} & \textbf{Performance} & \textbf{Robust} & \textbf{Transparent} & \textbf{Header Size} & \textbf{Usage/API} \\
    \midrule
			LFRC~\cite{refCount,Valois:1995:LLL:224964.224988} & - & Very Slow & Yes & Partially & N/A, but 1 extra & Harder, \\
			& & & & (swap) & word per pointer & intrusive \\
			HP~\cite{hazardPointers} & - & Slow & Yes & No (retire) & 1 word & Harder \\
			EBR~\cite{epoch1,epoch2} & RCU~\cite{Mckenney01read-copyupdate} & Fast & No & No (retire) & 1 word & Very easy \\
			DEBRA+~\cite{debra} & EBR & Fast & Partially & No (OS) & 1 word + desc & Harder \\
			PEBR~\cite{PEBR} & EBR, HP & Medium & Yes & No (retire, OS) & 1 word & Medium \\
			HE~\cite{hazardEra} & EBR, HP & Medium & Yes & No (retire) & 3 words (64 bit) & Harder \\
			IBR (2GE)~\cite{hazardEra} & EBR, HP & Fast & Yes & No (retire) & 3 words (64 bit) & Medium \\
			FreeAccess~\cite{Cohen:2018:DSD:3288538.3276513} & AOA~\cite{Cohen:2015:AMR:2814270.2814298} & Fast & Yes & Partially & GC tracking & Modified \\
			& & & & (swap, GC) & & compiler \\
			\textbf{\hyaline{}} & - & Fast & No & \textbf{Yes} & 3 words & \textbf{Very easy} \\
			\hyaline{}-1 & - & Fast & No & Partially & 3 words & Very easy \\
			\textbf{\hyaline{}-S} & \hyaline{}, & Fast & \textbf{Yes}\footnotemark & \textbf{Yes} & 3 words & \textbf{Medium} \\
			& part. HE/IBR & & & & \\
			\hyaline{}-1S & \hyaline{}-1, & Fast & Yes & Partially & 3 words & Medium \\
			& part. HE/IBR & & & & \\
	\bottomrule

\end{tabular}
}
\end{center}
\label{tbl:comparsmr}
\end{table*}

HP~\cite{hazardPointers} was among the first to recognize this problem and propose a safer API model, which wraps every pointer access.
HP more precisely tracks retired objects using pointers and assigns special indices to accessed objects. HE~\cite{hazardEra} adopted the same API model but proposed to internally record eras (epochs) rather than pointers.
More recently, IBR~\cite{IntervalBased} simplified HE's API by abolishing indices when wrapping pointers,
bringing it closer to the original EBR model.

We added additional robustness guarantees to \hyaline{} and \hyaline{}-1 by extending the API with a pointer wrapper method, \textit{deref}, as in IBR. Our robust schemes, \hyaline{}-S and \hyaline{}-1S,
adopt the birth eras approach from HE and IBR to guard against stalled threads.
The main idea is to mark each allocated object with the global counter, so that stalled threads will only hold older objects. Note
that whereas HE and IBR also use retire eras to identify reclamation intervals, Hyaline relies on reference counting.

\footnotetext{Hyaline-S adaptively changes the number of slots to guarantee robustness.}

\subsubsection*{Reclamation Cost}

Many reclamation schemes have a non-constant reclamation cost.
For example, HP, HE, and IBR need to periodically peruse thread local lists of not-yet-reclaimed objects to check if an object can be safely reclaimed.
\hyaline{} does not need to periodically check thread local lists. We discuss and prove reclamation costs in Section~\ref{sec:correctness}.

\subsubsection*{Transparency}

Most existing SMR schemes maintain special entries throughout thread lifecycles -- e.g., 
static arrays indexed by thread IDs. In practice, threads can be created and deleted
dynamically,
and practical implementations~\cite{CONCURRENCYKIT} maintain lists rather than
arrays with per-thread entries. However, this puts an extra
burden on programmers who have to explicitly
register and unregister threads.
This also breaks seamless integration, as
concurrent data structures cannot be accessed outside
thread contexts -- e.g., signal handlers or OS interrupt contexts. Moreover, unregistration is blocking,
as a thread needs to complete deallocation, which
is impossible until all other threads promise
not to access its locally retired objects.
In \hyaline{}, threads can completely forget about previously retired objects after calling \textit{leave}, as they are already (or will be) taken care for by the remaining threads. (In Section~\ref{sec:vectorized-smr}, \textit{retire}
uses local batches, but they can be immediately finalized by allocating a finite number of dummy nodes.)
We call a scheme \textit{transparent} if it avoids the problems mentioned above.

\subsubsection*{Snapshot-Freedom} To make certain schemes (HP, HE, IBR, etc)
more efficient, when
examining what objects can be deleted safely, a \textit{snapshot}
of global state (i.e., all hazard pointers or eras) is taken.
The per-thread snapshot sidesteps expensive cache misses since it can be
consulted repeatedly for \emph{all} not-yet-freed objects. Per-thread snapshots
are typically preallocated, resulting in extra $O(n^2)$ memory usage, which is
substantial as the number of threads, $n$, grows.
Furthermore, pre-allocated memory needs to be expanded if the number
of threads grows dynamically, which presents additional challenges for \textit{transparency}.
In contrast, EBR consults the global state only once per
each examination and does not take snapshots. All Hyaline schemes are also
snapshot-free.

\subsubsection*{Semantics}

Most robust schemes provide different semantics in handling memory objects
that have never been dereferenced.
Whereas non-robust schemes, such as EBR, can work with the original lock-free linked list~\cite{harrisList}, robust schemes (HP, HE, and IBR)
require a modification~\cite{hazardPointers} that timely retires deleted
list nodes. Our non-robust and robust Hyaline schemes have a similar
distinction.
FreeAccess~\cite{Cohen:2018:DSD:3288538.3276513} -- a recent scheme -- specifically tackles the semantics problem. The scheme is still robust, but falls short on transparently handling the \textit{swap} operation
(can be used for better scalability), and needs compiler modifications.
It also uses a garbage collector which is undesirable when fully transparent
memory management is needed, such as in OS kernels.

\subsubsection*{Memory Overhead (Header Size)}

SMR schemes also differ in extra memory required per each node. For example, EBR, HP, and PEBR store a (thread-local) list pointer per node.
HE and IBR additionally require two 64-bit eras.
All Hyaline variants require 3 words which is equivalent to HE and IBR for 64-bit CPUs and more efficient for 32-bit CPUs.

Although EBR/HP/PEBR's overhead can be fully eliminated by allocating an
intermediate container object when retiring, this causes undesirable
circular allocator dependency. In the same vein, HE, IBR, and Hyaline schemes can reduce the overhead to just 1 word.

\subsubsection*{Summary}
Existing approaches are discussed in detail in Section~\ref{sec:related}.
Table~\ref{tbl:comparsmr} presents a qualitative and quantitative comparison of \hyaline{} with other schemes on metrics including performance, robustness, and transparency.
We also categorize API as hard, medium, etc., similar to the discussion in~\cite{IntervalBased}. Although this categorization is somewhat subjective, we note
that the medium difficulty in robust Hyaline-S and Hyaline-1S implies that
\textit{deref} on pointers can be fully hidden using standard language idioms,
such as smart pointers in C++, and no extra programming
language or OS support is needed. This is not true for schemes that rely on OS
mechanisms. Furthermore, schemes that use HP's API require assigning
indices to reserved objects and annotating where a pointer is used for the last time. These cannot be hidden in smart pointers easily and need to be handled explicitly by a programmer.

\section{\hyaline{}}
\label{sec:hyaline}

\hyaline{} is a member of the family of memory reclamation techniques where programs explicitly retire objects and ensure that retired objects are not reachable from subsequent operations on the data structures. In addition, each operation on the data structures must be enclosed between \textit{enter} and \textit{leave} calls as presented in Figure~\ref{alg:smrapi}.
\hyaline{} keeps track of all active threads using \emph{special}
reference counters. Those counters are not typical and do not represent
the number of references to objects directly.
Unlike traditional per-object counting, the use of
reference counters is triggered only when handling
retired objects (nodes). Thus, insertions and read-only traversals
avoid expensive (and inconvenient) per-access counting.

\newcommand{\smrapi}{%
{
\LinesNumberedHidden
\begin{algorithm2e}[H]
 handle\_t Handle = \textbf{enter}()\;
 \tcp{\textbf{deref} is for \hyaline{}-S,}
 \tcp{not needed in \hyaline{}}
 \HiLi List = \textbf{deref}(\&LinkedList)\;
 \HiLi Node = \textbf{deref}(\&List->Next)\;
 \textbf{retire}(Node)\;
 \tcp{Do something else...}
 \textbf{leave}(Handle)\;
 \tcp{\textbf{Transparency:} the thread}
 \tcp{need not check any of the}
 \tcp{retired nodes after this}
\end{algorithm2e}
}
}

\begin{figure}
\smrapi
\vspace{-15pt}
\caption{\hyaline{}'s transparent API.}
\label{alg:smrapi}
\end{figure}%

\subsection{Simplified Version}
\label{sec:scalar}

We first describe a simpler version of \hyaline{} that manipulates only a single \emph{retirement list}. This version is more prone to CAS contention, a problem addressed by a scalable version that we present in Section~\ref{sec:vectorized-smr}. 

\hyaline{}'s key idea is that all threads participate in the tracking of  retired
nodes in the global list even if they are not actively retiring any nodes themselves.
A special {\tt Head} tuple is associated with the retirement list.
The tuple consists of {\tt HPtr} and {\tt HRef} fields. {\tt HPtr} is a pointer
to the beginning of the list, and {\tt HRef} counts the number of active threads.
Initially, when the list is empty, {\tt HPtr = Null} and {\tt HRef = 0}.

When each thread \emph{enters},
it atomically increments the {\tt HRef} field to indicate that a new
thread has arrived. At the same time, the thread records
a snapshot value of \texttt{HPtr} at the moment it entered.
The thread stores this snapshot value
in a special per-thread {\tt Handle} variable.
Since updates on the {\tt [HRef,HPtr]} tuple have to be atomic, we use double-width RMW
to update \texttt{Head}.

As nodes are \emph{retired}, threads append them to the list
(Figure~\ref{fig:smrintro}).
Since nodes need to be connected in the list,
each node incorporates a special header in
addition to any other fields used for representing the encapsulating
data structure. The header contains {\tt Next} and {\tt NRef} fields.
{\tt Next} is a pointer
to the next node in the list, and {\tt NRef} of every non-{\tt Head} node
counts threads that can still access this node.
For the very first node, {\tt HRef} itself serves this purpose. (We will describe how
{\tt NRef} is initialized later.)

\begin{figure}
\centering
\includegraphics[width=.8\columnwidth]{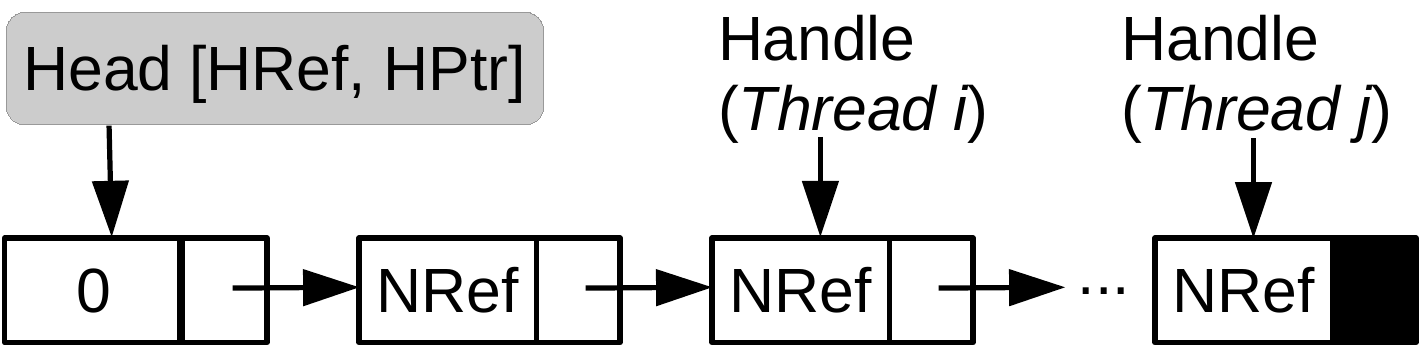}
\caption{(Simplified) \hyaline{}: List of retired nodes.}
\label{fig:smrintro}
\end{figure}%

When a thread completes a data structure operation (\emph{leav\-es}),
it decrements {\tt HRef} to indicate that one
thread has just left. Simultaneously, it retrieves the {\tt HPtr} pointer
and then traverses a sublist of nodes from {\tt HPtr} to {\tt Handle} that were retired
since it initiated the operation (\textit{enter}).
While traversing the sublist, the thread decrements {\tt NRef} counters
for every non-Head node. The first node's counter ({\tt HRef}) is already
decremented. A node is freed when its counter becomes $0$.

Using \texttt{HRef} for the first node
prevents the ABA problem as a thread has a reference to every node through its {\tt Handle} inclusively -- i.e., no other thread can recycle these nodes.

We now describe how {\tt NRef} values get propagated across the list.
When retiring a node, its {\tt NRef} is set to $0$, as the actual
counter for the very first node in the list is
inferred from the {\tt HRef} field of {\tt Head}.
As threads insert retired nodes,
they initialize {\tt Next} and atomically update {\tt Head}
(shown with dashed lines in Figure~\ref{fig:smrfig}, part (a)).
{\tt NRef} of the predecessor is initially 0. However, since
a node was just added, the predecessor is no longer the first
node, and any concurrent thread may decrement its {\tt NRef}, 
converting it to a negative value, but will not deallocate this predecessor
({\tt NRef} must become 0 for the node to be deallocated).
Finally, the current thread atomically adds the snapshot value of {\tt HRef}
(obtained while appending the new node) to the {\tt NRef} field
of the predecessor, and its new \emph{adjusted} value becomes~$\ge 1$ (Figure~\ref{fig:smrfig}, part (b)). Retiring is now complete (Figure~\ref{fig:smrfig}, part (c)).

\begin{figure}
\centering
\includegraphics[width=\columnwidth]{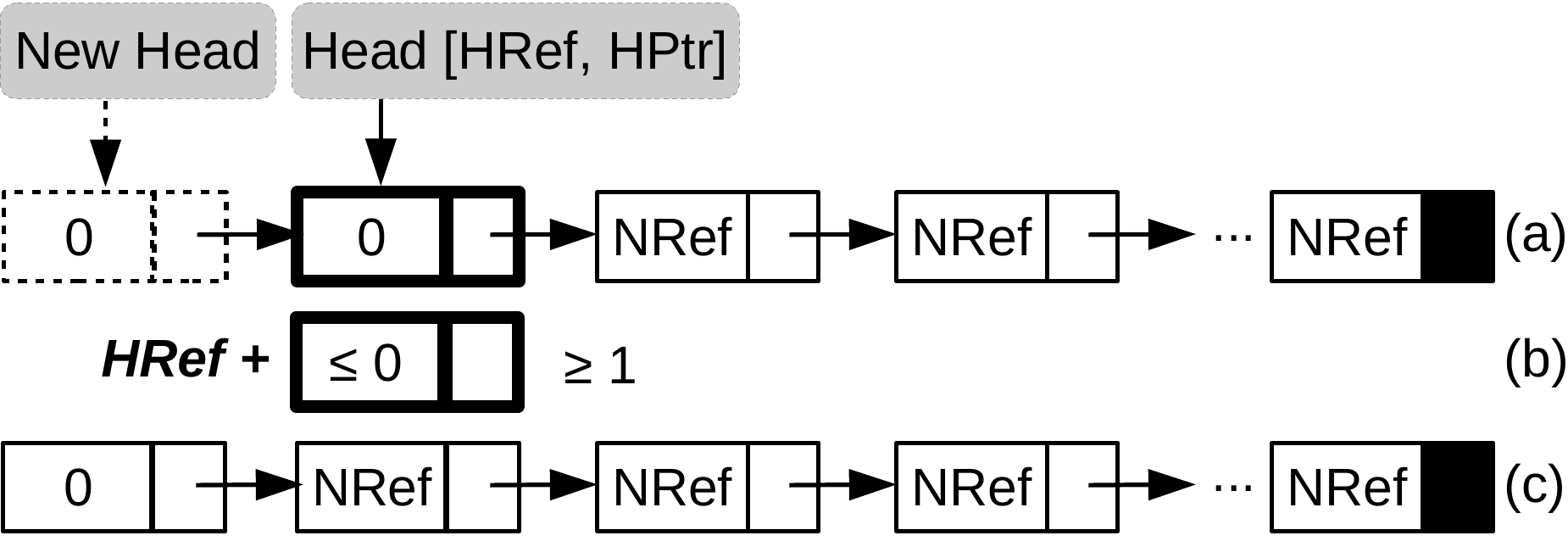}
\caption{(Simplified) \hyaline{}: Adjusting the reference counter of a predecessor node when retiring.}
\label{fig:smrfig}
\end{figure}

Essentially, {\tt NRef} is the difference between two logical variables:
the number of times the node is acquired and the number of times it is released. Due to concurrency interleaving, {\tt NRef} is relaxed and can be negative.

Figure~\ref{fig:smrexample} shows an example with 3 threads.
Initially, \texttt{HRef} is $0$ and \texttt{HPtr} is $Null$. (a) Thread 1 calls
\textit{enter} to atomically increment {\tt HRef} and retrieve its
handle. (b) Thread 1 retires node N1; as the list is empty,
there is no predecessor to adjust. (c) Thread 2 enters, but (d) it stalls
while retiring N2.
Meanwhile, (e) Thread 3 enters.
(f) Thread 1 leaves and dereferences
all nodes in the list through its handle {\tt Null}.
Since N2 is the first node, {\tt HRef} is decremented, but N2's
{\tt NRef} field remains intact. N1 stays as its {\tt NRef} is now negative.
(g) Thread 2
resumes and completes its adjustment for N1. (h) Thread 2 then leaves,
dereferences all nodes, and deallocates N1. (i) Finally, Thread 3
leaves and deallocates N2.

\begin{figure}
\includegraphics[width=.9\columnwidth]{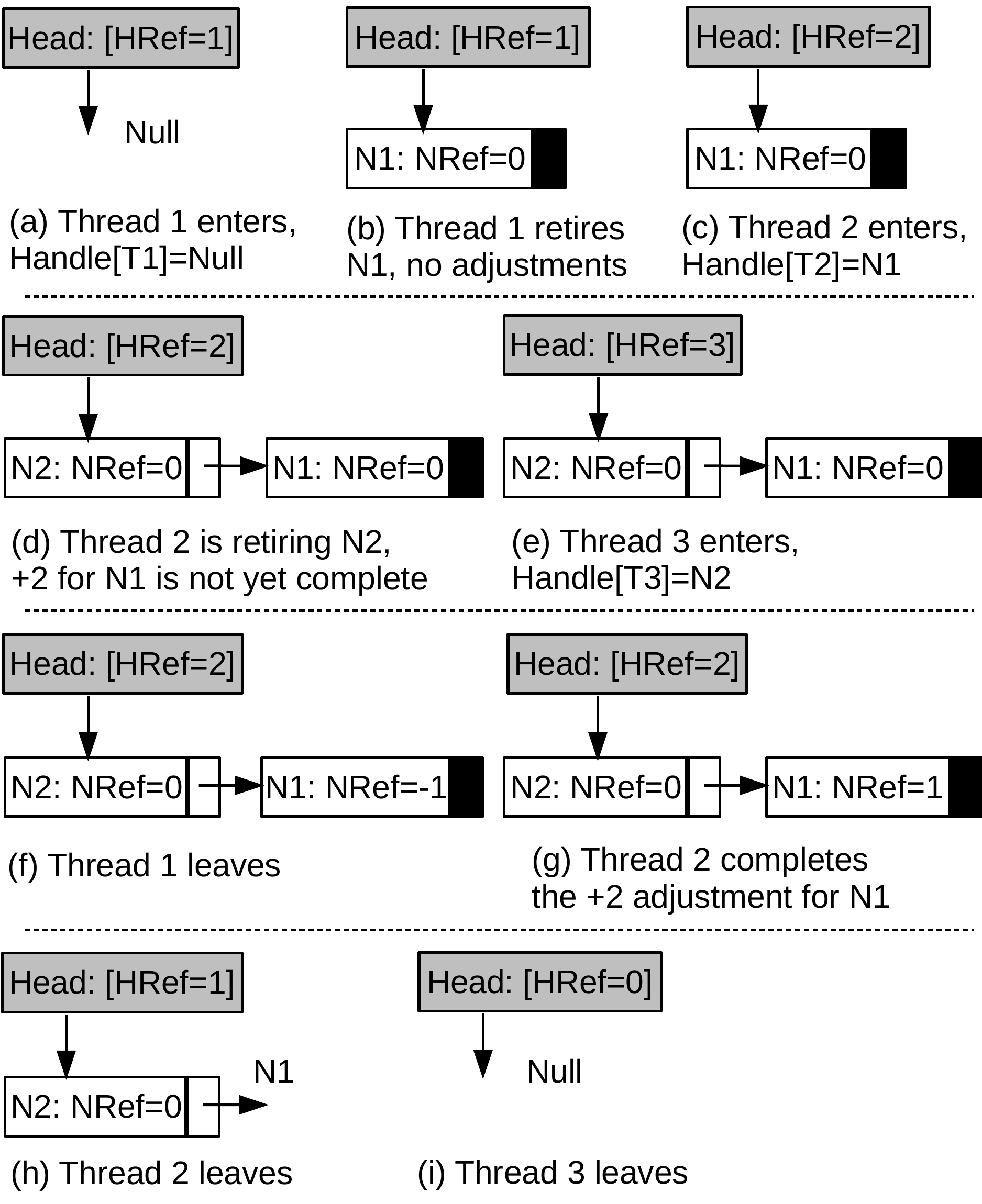}
\caption{Example for single-list \hyaline{}, Nx is a node.}
\label{fig:smrexample}
\end{figure}

Although this version is not yet optimized for performance, we make one
important observation regarding the algorithm's asynchronous tracking mentioned in introduction:
threads traverse lists just once when dereferencing nodes in \textit{leave}.
This is unlike EBR, where all threads are periodically checked
if they are past the retired node(s) epoch.
Section~\ref{sec:eval} reveals this to be \hyaline{}'s advantage for oversubscribed tests.

\subsubsection*{Alternative Designs}
\hyaline{} carefully avoids the ABA problem which is possible with other
designs. A straightforward alternative is to use {\tt NRef} in the first node
as usual, i.e., to indicate its reference counter. {\tt HRef} can
then indicate the reference counter of the node to be retired in
the future. However, this design triggers the ABA problem, as the node pointed
to by {\tt Handle} (end-of-list marker) can be recycled and may reappear in the
retirement sublist when \textit{leave} is called. The sublist
will get truncated, and the remaining nodes will never be dereferenced.
An extra ABA tag could prevent this problem but {\tt Head} already
needs double-width operations.

{\tt NRef} could also store the reference counter
of the next rather than the current node. {\tt HRef} would indicate the
reference counter of the first node as in \hyaline{}.
While this design is ABA-free, it has a major drawback: nodes must
be dereferenced in the reversed order, as they are dependent on
each other. It complicates the implementation, as backward links also
need to be stored. Moreover, {\tt Head} cannot be updated until the retirement
sublist is fully traversed, and by that time, other
nodes may already be retired. If deallocation is slow, one
unlucky thread can easily get stuck in a state where it has to constantly
deallocate nodes retired by other threads. Other threads will simply
continue dereferencing their counters and keep retiring more and more nodes.
\hyaline{} avoids this problem by immediately
decrementing {\tt HRef}, as all nodes are independent from each other there.

\subsection{Scalable Multiple-List Version}
\label{sec:vectorized-smr}

If deletions are frequent, \textit{retire} calls may create contention on
{\tt Head}. To alleviate the contention, threads create local lists
of nodes to be retired. Threads retire entire batches of nodes and keep a single
reference counter per batch rather than individual nodes. Batches do not have direct analogues in epoch-based approaches, where
all retired nodes are always in thread-local lists. However, batch size (the number of nodes in a batch) impacts
the cost of retirement in a way that is similar to the frequency of epoch counter increments.

Frequent \textit{enter} and \textit{leave} calls also create contention on
{\tt Head}, which is undesirable for high-throughput data structures. To address
this problem, we introduce the concept of \emph{slots}, which a thread
chooses randomly or based on its ID. Slots do not need to be statically
assigned: they can change from one operation to another.
Each slot has its own {\tt Head}, and thus, we end up with multiple
retirement lists. When a batch is retired, it needs to be added
to each slot that has its {\tt HRef}~$\ne 0$ (i.e., slots with
active threads).

Since batches
are added atomically only to one slot at a time, slots may end up with
non-identical order of batches. To support this, we need individual
list pointers. We take advantage of the fact that we retire entire batches
rather than nodes. To that end, we require the number of
nodes in batches to be strictly greater than the number of slots. Each node
in a batch keeps the {\tt Next} pointer for the corresponding slot's retirement
list, and one additional node will keep the per-batch
{\tt NRef} counter instead. Additionally, all nodes in the batch are linked
together, and each node has an extra pointer to the node with {\tt NRef}.
Thus, each node keeps three variables irrespective of
batch sizes and total number of slots.

In Section~\ref{sec:scalar}, we described reference adjustments using
signed integer arithmetic. The same argument applies to unsigned
numbers, in which case negative numbers represent very large
integers. We generalize this idea to accommodate \hyaline{}'s multiple-list
version. When adjusting a predecessor in
slot $i$, we add $\mathit{Adjs}+\mathit{HRef}_i$ rather than just $\mathit{HRef}_i$,
where $\mathit{Adjs}$ is a special constant which prevents
the adjustment for the predecessor to complete until all slots are handled.
Assuming that the number of slots, $k$, is a power of 2, and
the maximum representable unsigned integer value is $2^N-1$, we calculate:
$\mathit{Adjs} = \floor*{\frac{2^N-1}{k}} + 1$.

For example, if $k=1$ (simple version), $Adjs$ cancels out right away.
When $k=8$, assuming 64-bit integers, $\mathit{Adjs} = 2^{61}$.
It is easy to see that more generally, $\mathit{Adjs}$ cancels out after $k$ additions: $k\times\mathit{Adjs}=0$ due to unsigned integer overflow.

When retiring a batch, a predecessor batch has to accumulate $\mathit{Adjs}$
for all $k$ lists for the adjustment to complete.
Since some slots have no active threads, we accumulate $\mathit{Adjs}$ for
them when inserting the batch, and atomically
add the net value to {\tt NRef} of the \textit{current} batch as the final step.

\begin{figure}
\includegraphics[width=.9\columnwidth]{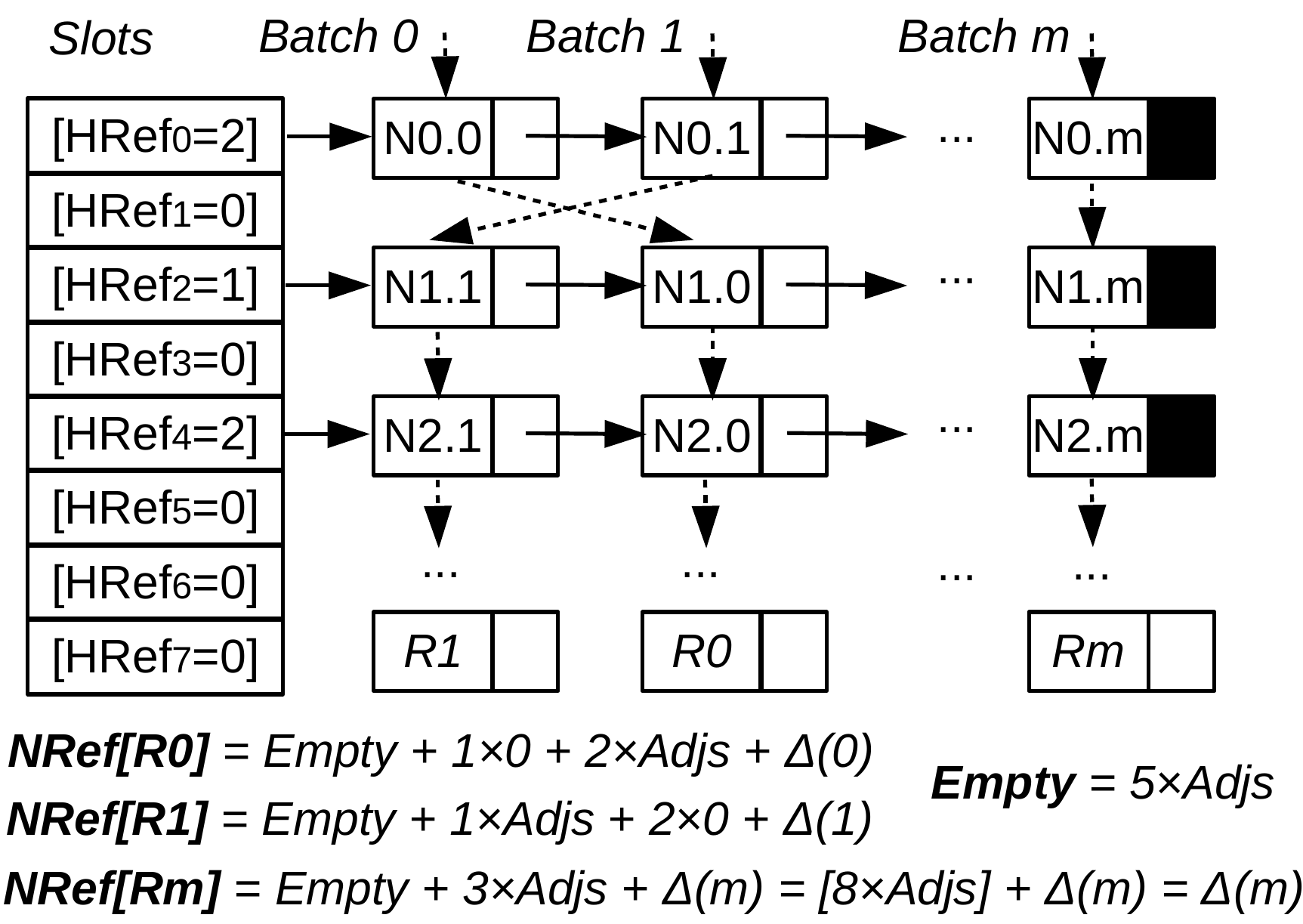}
\caption{Scalable (multiple-list) \hyaline{}.}
\label{fig:smrvector}
\end{figure}

In Figure~\ref{fig:smrvector}, we present an example with $k=8$ slots.
For the purpose of this example, we enumerated nodes to reflect their relative
slot positions (skipping empty slots) and corresponding batch
numbers. For convenience, $\mathit{Empty}$ denotes adjustments for five empty
slots ($\mathit{5\times Adjs}$) that need not be handled.
Batches are added one slot at a time, and two concurrent threads insert
them in an interleaved fashion.
When \emph{Batch~0} is inserted, it ends up in the first position
for slot~0 and the second position in all other active slots. {\tt NRef}
for Batch~0 is stored in the node $\mathit{R0}$ and contains
$\mathit{Empty}$ for empty slots, 0 for slot~0 (not yet adjusted), $2\times\mathit{Adjs}$ for slots~2 and~4
(adjusted when retiring \emph{Batch~1}), and the
\emph{actual} counter component $\mathit{\Delta}_0$. $\mathit{\Delta}_0$
contains the snapshot values of $\mathit{HRef}_2$ and $\mathit{HRef}_4$
when Batch~1 is inserted.
A similar breakdown is shown for \emph{Batch~1}.
For $\mathit{Batch}~m$, all adjustments are already
cancelled out, and its NRef node contains just $\mathit{\Delta}_m$.

\subsubsection*{\hyaline{}-1 for Single-width CAS}
In a special case, if every thread allocates its own unique slot, we can
squeeze {\tt HRef} into one bit and merge it with
{\tt HPtr}. This simplifies \textit{enter} and \textit{leave}, since they
can use ordinary \textit{write} and \textit{swap} in lieu of CAS,
making these operations wait-free.
Adjustments
can be also simplified: instead of adjusting predecessors and empty slots,
we count the number of slots a batch is added to.  ($k$ does not have to
be a power of 2.)
After adding the batch to the last slot, {\tt NRef} of the batch
is adjusted by this counter.

\subsubsection*{Contention}
Assuming that slots are cache-line aligned, CAS on {\tt Head} is almost uncontended, and
MESIF/MOESI protocols used by modern Intel/AMD CPUs incur no substantial performance
penalty~\cite{Schweizer:2015:ECA:2923305.2923811} (contrary to popular belief). The cost of {\it enter} and {\it leave} is therefore relatively small.
Section~\ref{sec:eval} further shows that there is very
little difference in \hyaline{}'s or \hyaline{}-1's overall performance even for
high-throughput data structures, which confirms that CAS in \textit{enter} or \textit{leave} is not a source of any measurable performance penalty.

\subsubsection*{Costs}
Hyaline-1's and EBR's \textit{enter}/\textit{leave} costs are similar at the very least for x86 (write+barrier is replaced with \textit{swap}, i.e., \textit{xchg}, by recent gcc/clang compilers; AMD explicitly recommends \textit{xchg} for sequentially-consistent writes~\cite{AMD}). Due to low CAS contention (used in lieu of \textit{swap}), more general Hyaline also exhibits very similar performance. Though \textit{leave} is longer in Hyaline and Hyaline-1 due to list traversals, this cost is simply incurred elsewhere in EBR, e.g., in \textit{retire}.

\section{Algorithm Descriptions}
\label{sec:algorithms}

The main idea of Hyaline is that it always implicitly keeps track of concurrent threads. The number of concurrent threads gets reflected in the counter for each retired batch. Each of the concurrent threads has to decrement this counter explicitly. When the counter gets to 0, the batch is reclaimed.

Figure~\ref{fig:hyalineds} presents the layout of nodes, batch structure, and global state. Batches are first accumulated locally.
All retired nodes are linked together using \texttt{BatchNext}.
The very last node (with a reference counter) is denoted as \texttt{NRefNode}, its \texttt{BatchNext} points to the very first node (i.e., in a cyclic list manner).
Every node has a pointer to \texttt{NRefNode}.

\begin{figure}[ht]
\begin{subfigure}{.47\columnwidth}
\lstset{language=C++}
\begin{lstlisting}[frame=single]
// Double-width integer
struct head_t {
	int     HRef;
	node_t *HPtr;
};
// Nodes are retired locally
// by appending to a batch;
// when the batch fills up,
// retire() is called for the
// entire batch. As nodes get
// appended to the batch,
// MinBirth is set to minimum
// BirthEra across batch nodes.
struct local_batch_t {
	// Pointer to the NRef node
	node_t *NRefNode;
	// The first node in the batch
	node_t *FirstNode;
	// Only for Hyaline-S or -1S
	int     MinBirth;
};
\end{lstlisting}
\end{subfigure}
\hspace{.5mm}
\begin{subfigure}{.47\columnwidth}
\lstset{language=C++}
\begin{lstlisting}[frame=single]
struct node_t {
	union {
		// NRefNode: refer. counter
		int     NRef;
		// Others: per-slot list
		node_t *Next;
		// Only for Hyaline-S or -1S
		int     BirthEra;
	};
	// Pointer to the NRef node
	node_t *NRefNode;
	// Next node in the batch
	node_t *BatchNext;
};

// Slots (all Hyaline variants)
head_t Heads[k];
// Last era (Hyaline-S or -1S)
int Accesses[k];
// Acknowledgments for Hyaline-S
int Acks[k];
\end{lstlisting}
\end{subfigure}
\caption{Hyaline's nodes and global state.}
\label{fig:hyalineds}
\end{figure}

\subsection{Basic \hyaline{}}

In Figure~\ref{alg:hyaline}, we present pseudocode for the \textit{enter},
\textit{leave}, and \textit{retire} operations. \textit{enter}
atomically increments the \texttt{HRef} variable while fetching the current pointer in
a given slot.
\textit{retire} inserts a batch to all slots. For empty slots, it counts
$\mathit{Empty}$ adjustments and adds $\mathit{Empty}$ to {\tt NRef} of the batch in the very end.
For each slot, a predecessor is adjusted by the corresponding {\tt HRef} snapshot
value and $\mathit{Adjs}$.
\textit{leave} decrements \texttt{HRef}, but also reads \texttt{Next} from the
node \texttt{Head} is pointing to. Since a thread always has a reference
to the head of the list,
reading the first node is safe. The last thread replaces the first node
with \texttt{Null} treating it as a predecessor
in \textit{retire}. Finally, succeeding nodes (if any) are dereferenced
in the \textit{traverse} helper method.

\hyaline{}-1 in Figure~\ref{alg:hyaline1} replaces {\it enter} and
{\it leave} with simpler equivalents. Since one thread is the sole
owner of all nodes, \textit{leave} can detach the first node
immediately and read the node that follows after that. This simpler scheme
also does not adjust predecessor nodes.

\newcommand{\algoone}{%
\begin{algorithm2e}[H]
\Fn{\textbf{handle\_t} enter(\textbf{int} slot)} {
Heads[slot] = \{ .HRef=1, .HPtr=Null \}\;
\Return Null\tcp*{Returns a handle}
}
\tcp{\textbf{Replace \#2\# in retire() with:} Inserts++}
\end{algorithm2e}
}

\newcommand{\algotwo}{%
\begin{algorithm2e}[H]
\setcounter{AlgoLine}{3}
\Fn{\textbf{void} leave(\textbf{int} slot, \textbf{handle\_t} handle)} {
Head = SWAP(\&Heads[slot], \{ .HRef=0, .HPtr=Null \})\;
\lIf {Head.HPtr\ $\neq$ Null} {traverse(Head.HPtr, handle)}
}
\tcp{\textbf{Replace \#3\#:} adjust(batch->FirstNode, Inserts)}
\end{algorithm2e}
}

\newcommand{\algsone}{%
\begin{algorithm2e}[H]
\textbf{int} AllocEra = 0\tcp*{Initialization}
thread \textbf{int} AllocCounter = 0\;
\lForAll{\textbf{int} Access $\in$ Accesses[k]} {Access = 0}
\lForAll{\textbf{signed int} Ack $\in$ Acks[k]} {Ack = 0}
\BlankLine
\Fn{\textbf{node\_t} *deref(\textbf{int} slot, \textbf{node\_t} **ptr\_node)} {
Access = Accesses[slot]\;
\While {True} {
\textbf{node\_t} * Node = (*ptr\_node)\;
Alloc = AllocEra\;
\lIf {Access = Alloc} {\Return Node}
Access = touch(slot, Alloc)\;
}
}
\Fn{\textbf{void} retire(\textbf{local\_batch\_t} *batch)} {
\tcp{\textbf{Replace \#1\# in retire() with:}}
Access = Accesses[slot]\;
\lIf {Head.HRef = 0 \OrOp Access < batch->MinBirth} {...}
\tcp{\textbf{Place after \#2\# in retire():}}
FAA(\&Acks[slot], Head.HRef)\tcp*{\textbf{\hyaline{}-S only}}
}
\end{algorithm2e}
}

\newcommand{\algstwo}{%
\begin{algorithm2e}[H]
\setcounter{AlgoLine}{15}
\Fn{\textbf{void} init\_node(\textbf{node\_t} *node)} {
\lIf {AllocCounter++ \ModOp Freq = 0} {FAA(\&AllocEra, 1)}
node->BirthEra = AllocEra\tcp*{Shared with Next}
}
\Fn{\textbf{int} touch(\textbf{int} slot, \textbf{int} era)} {
\Do (\tcp*[f]{\textbf{\hyaline{}-1S: use a regular memory write}}){\Not CAS(\&Accesses[slot], Access, era)}{
Access = Accesses[slot]\;
\lIf {Access $\ge$ era} {
\Return Access}
}
\Return era
}
\tcp{\textbf{Hyaline-S only, not needed in Hyaline-1S}}
\Fn{\textbf{handle\_t} enter(\textbf{int} *slot)} {
\While {Acks[*slot] $\ge$ Threshold} {
*slot = (*slot + 1) \ModOp k\tcp*{Try all k slots}
}
}
\Fn{\textbf{void} traverse(\textbf{int} slot, \textbf{node\_t} *next, \textbf{handle\_t} handle)} {
Counter = 0\;
\lDo{...} {
Counter++ ...
}
FAA(\&Acks[slot], $-$Counter)\;
}
\end{algorithm2e}
}

\newcommand{\algone}{%
\begin{algorithm2e}[H]
\ForAll (\tcp*[f]{Initialization}) {\textbf{head\_t} Head $\in$ Heads[k]} {
Head.HRef = 0, Head.HPtr = Null\;
}
\Fn{\textbf{handle\_t} enter(\textbf{int} slot)} {
Last = FAA(\&Heads[slot], \{ .HRef=1, .HPtr=0 \})\;
\Return Last.HPtr\tcp*{Returns a handle}
}
\Fn{\textbf{void} leave(\textbf{int} slot, \textbf{handle\_t} handle)} {
\Do(\tcp*[f]{Decrement HRef and fetch Next}){\Not CAS(\&Heads[slot], Head, New)}{
Head = Heads[slot]\;
Curr = Head.HPtr\;
\If {Curr $\neq$ handle} {
	Next = Curr->Next\;
}
New.HPtr = Curr\;
\lIf {Head.HRef = 1} {New.HPtr = Null}
New.HRef = Head.HRef - 1\;
}
\If (\tcp*[f]{Treat Curr as if}) {Head.HRef = 1 \AndOp Curr} {
    adjust(Curr, Adjs)\tcp*{it were a predecessor}
}
\If (\tcp*[f]{Non-empty list}) {Curr $\neq$ handle} {
	traverse(Next, handle)\;
}
}
\Fn{\textbf{void} adjust(\textbf{node\_t} *node, \textbf{int} val)} {
Ref = node->NRefNode\;
\tcp{free\_batch() frees all nodes by iterating}
\tcp{BatchNext. BatchNext of Ref points to the}
\tcp{first node in the batch.}
	\lIf {FAA(\&Ref->NRef, val) = -val} {free\_batch(Ref->BatchNext)}
}
\end{algorithm2e}
}

\newcommand{\algtwo}{%
\begin{algorithm2e}[H]
\setcounter{AlgoLine}{22}
\Fn{\textbf{void} retire(\textbf{local\_batch\_t} *batch)} {
doAdj = False, Empty = 0, Inserts = 0\;
CurrNode = batch->FirstNode\;
batch->NRefNode->NRef = 0\;
\ForAll{\textbf{int} slot $\in$ 0..k-1} {
\Do (\tcp*[f]{Add the batch to this slot}) {\Not CAS(\&Heads[slot], Head, New)}{
Head = Heads[slot]\;
\If (\tcp*[f]{\textbf{\#1\#}}) {Head.HRef = 0} {
doAdj = True, Empty += Adjs\;
\textbf{continue} with the next \textit{slot};
}
New.HPtr = CurrNode\;
New.HRef = Head.HRef\;
New.HPtr->Next = Head.HPtr\;
}
CurrNode = CurrNode->BatchNext\;
adjust(Head.HPtr, Adjs + Head.HRef)\tcp*{\textbf{\#2\#}}
}
\lIf (\tcp*[f]{\textbf{\#3\#}}) {doAdj} {adjust(batch->FirstNode, Empty)}}
\Fn{\textbf{void} traverse(\textbf{node\_t} *next, \textbf{handle\_t} handle)} {
\Do (\tcp*[f]{Traverse the retirement sublist}) {Curr $\neq$ handle}{
Curr = next\;
\lIf {Curr = Null} {\textbf{break}}
next = Curr->Next\;
Ref = Curr->NRefNode\;
\If (\tcp*[f]{BatchNext of Ref points to the}) {FAA(\&Ref->NRef, -1) = 1} {
		free\_batch(Ref->BatchNext)\tcp*{first node in the batch}
}
}
}
\end{algorithm2e}
}

\begin{figure*}
\begin{subfigure}{.5\textwidth}
\algone
\end{subfigure}%
\begin{subfigure}{.5\textwidth}
\algtwo
\end{subfigure}%
\caption{\hyaline{} for double-width CAS.}
\label{alg:hyaline}
\end{figure*}

\begin{figure*}
\begin{subfigure}{.5\textwidth}
\algoone
\end{subfigure}%
\begin{subfigure}{.5\textwidth}
\algotwo
\end{subfigure}%
\caption{\hyaline{}-1 for single-width CAS (substitutes for functions).}
\label{alg:hyaline1}
\end{figure*}

\begin{figure*}
\begin{subfigure}{.5\textwidth}
\algsone
\end{subfigure}%
\begin{subfigure}{.5\textwidth}
\algstwo
\end{subfigure}%
\caption{\hyaline{}-S and \hyaline{}-1S: dealing with stalled threads (extension).}
\label{alg:hyalines}
\end{figure*}

\subsection{\hyaline{}-S}

\label{sec:hyalinerobust}

To deal with stalled threads in \hyaline{}, we extend \hyaline{} by partially adopting the idea from HE and IBR to
record \emph{birth eras} when allocating memory.
The high-level idea is to mark each allocated object with the global counter, so that stalled threads will only hold older objects. Objects allocated
after those threads stall will be counted only towards active threads.
Birth eras simply facilitate
detection of stalled threads in Hyaline. (Compare it to HE and IBR,
where birth and retire eras define actual
reclamation intervals.)
Unlike HE/IBR, birth eras share space with other variables, e.g., \texttt{Next}, as
they are not required to survive \textit{retire}.

\hyaline{}-S, unlike \hyaline{}-1S,
supports multiple threads per each slot, so we
have to record eras such that they can be shared across multiple
threads. That presents extra challenges when dealing with stalled
threads since they may interleave with non-stalled threads.

In Figure~\ref{alg:hyalines}, we present \hyaline{}-S.
Our API model is reminiscent of 2GE-IBR~\cite{IntervalBased} which only requires
to additionally wrap all pointer reads in a special \textit{deref} call.
The eras are 64-bit numbers
which are assumed to never overflow in practice.
When nodes are allocated, \textit{init\_node} initializes their birth eras
with the era clock value. When dereferencing pointers, threads call
\textit{deref} to update a per-slot \textit{access} era value.
Since \hyaline{}-S allows arbitrary number of threads per slot,
threads must share per-slot eras, and the maximum era
needs to be set using the \textit{touch} helper function.
(\hyaline{}-1S can just write the new era, as there is a 1:1
thread-to-slot mapping.)
Since all active threads update
eras when calling \textit{deref} in their slots, \textit{retire} simply uses
the minimum birth era across all nodes in a batch, and skips slots with
stale eras.

Since threads share per-slot eras in \hyaline{}-S, it is crucial to stay away from slots
occupied by stalled threads when entering.
Each slot keeps a special \texttt{Ack} value incremented
by \textit{retire}. \texttt{Ack} accumulates the  total number of active threads across
all retired batches in the slot. Since all retired batches inevitably appear in the
retirement sublists of all active threads, each thread acknowledges that it no longer
references batches by decrementing \texttt{Ack} in \textit{traverse}.
\texttt{Ack} can be negative temporarily if \textit{traverse}
takes place before \textit{FAA} in \textit{retire}. (Nonetheless, \texttt{Ack} only increases after finite number of retirements when at least one
thread is stalled, i.e., it does not call \textit{traverse}.) \texttt{Ack} may also
be positive, but after some threshold (e.g., 8192), \textit{enter} can assume that
the corresponding slot is occupied by stalled threads.
\texttt{Ack}s do not incur any measurable penalty as evidenced
by Section~\ref{sec:eval} where \hyaline{}-S and \hyaline{}-1S have roughly similar performance.

All slots may end up being occupied by stalled threads in Hyaline-S. To guarantee robustness, we can adaptively increase the number of slots
by using an extra array which stores pointers to arrays of slots (Section~\ref{sec:robust}).

\subsection{Adaptive Resizing for Hyaline-S}
\label{sec:robust}

Unlike Hyaline-1S which allocates a dedicated slot for each thread and is fully robust, Hyaline-S caps
the total number of slots. This limits robustness guarantees for Hyaline-S in rare situations
when all slots fill up with stalled threads and they begin to interfere with active threads.

We now describe an approach which makes Hyaline-S fully robust by adaptively increasing the
number of available slots, $k$, as a larger number of threads are stalled. We denote the initial $k$ value (a constant), $\mathit{Kmin}$. The current $k$ value is stored in a global atomic variable.

When a batch is finalized and \textit{retired}, we read the current $k$ value. (There is no problem if
concurrent threads increase the $k$ value right after we read it, as new slots will be used
by new \textit{enter} calls which need not account for already retired nodes. A larger
than necessary $k$ is also not a problem since the batch will simply be added to extra slots.)
We calculate the $\mathit{Adjs}$ value based on the current $k$ value and store it in each batch. Each node in a batch
contains a pointer to \texttt{NRefNode}, but \texttt{NRefNode} itself does not need to keep
this pointer. Instead, we use this variable to store the current $\mathit{Adjs}$ value for the batch.

When calling \textit{adjust}, we use the corresponding batch's $\mathit{Adjs}$ value. In Figure~\ref{alg:hyaline}, we have three \textit{adjust} calls: Line~17 uses $\mathit{Adjs}$ for the $\mathit{Curr}$'s batch,
Line~38 uses $\mathit{Adjs}$ for \texttt{HPtr}'s batch, and Line~39 uses $\mathit{Adjs}$ for the current batch.

When stalled threads occupy all slots (Figure~\ref{alg:hyalines}, Line~26),
we adaptively increase the number of slots. Since we cannot resize the initial array of slots easily, we
maintain a \textit{directory of slots}, an array of pointers to arrays of slots, as shown in Figure~\ref{fig:robustdir}. This array is fixed-size and small, e.g., for 64-bit CPUs, it
never exceeds 64 entries. Initially, only index $0$ points to the array of slots with $\mathit{Kmin}$ entries. As \textit{enter} runs out of slots, we allocate an additional array of $(\mathit{2\times Kmin}-\mathit{Kmin})$ slots such that the total number of slots doubles. We atomically
change index $1$ to point to the new array (we also offset this pointer by $\mathit{Kmin}$ to simplify the slot position calculation). If a concurrent thread also changes index $1$, the thread
for which the corresponding CAS fails will discard the allocated buffer. The aforementioned procedure applies to all arrays which use slots: \textit{Heads}, \textit{Accesses}, and \textit{Acks}.

\begin{figure*}
\includegraphics[width=.7\textwidth]{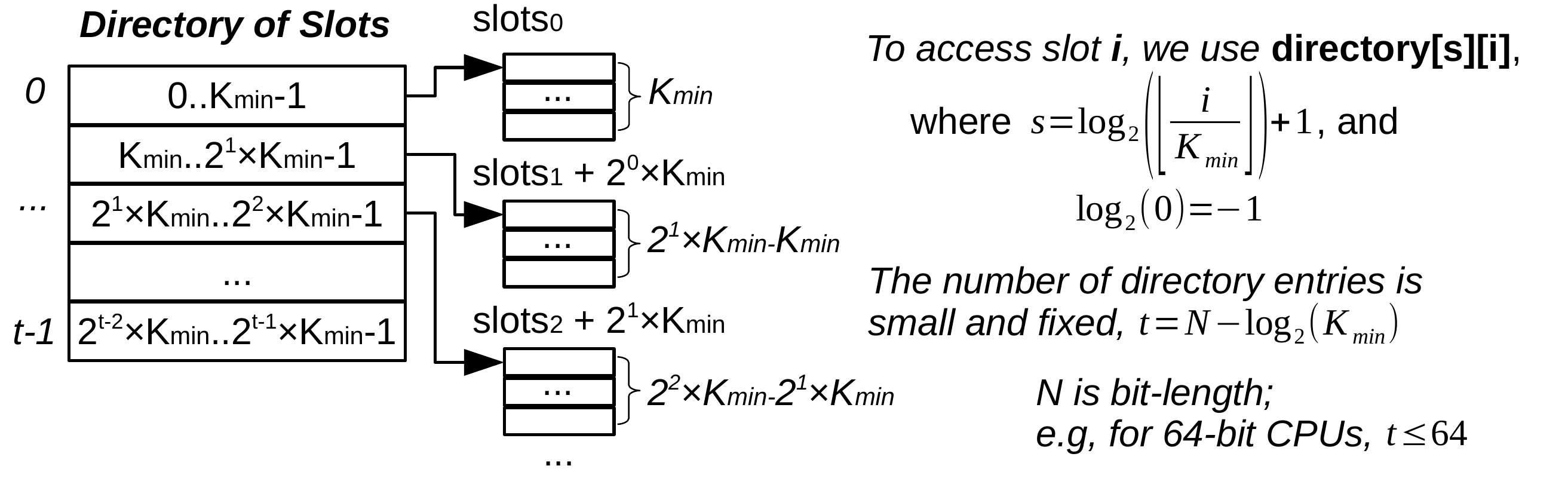}
\caption{Hyaline-S: adaptive resizing.}
\label{fig:robustdir}
\end{figure*}

To access a slot, we use the formula from Figure~\ref{fig:robustdir}, which calculates a directory
array index. The $\mathit{log}_2$ operation, including a special case of $\mathit{log}_2(0)=-1$, is efficiently implemented by the \textit{leading zero count} instruction, available on modern CPUs,
by using $\mathit{log}_2(x) = N - \mathit{lzcnt}(x) - 1$, where $N$ is bit-length.
Since we always double the number of slots, $k$, and the initial $\mathit{Kmin}$ value is a power-of-two number, our assumption that $k$ is a power-of-two number is still valid.

We always increase the number of slots as we detect more stalled threads and run out of slots. However, the number
of slots is bounded by the total number of stalled threads (rounded to the next power-of-two number). Since the number of threads is finite, memory occupied by slots is bounded, i.e., our algorithm is still robust. Existing robust SMR schemes similarly
require dedicated slots per \textit{each} thread.

\subsection{Usage Preference}
It should be feasible to always use Hyaline-1 in lieu of EBR, and Hyaline-1S
in lieu of HE, HP, or IBR given Hyaline's performance benefits (Section~\ref{sec:eval}). One exception is when users deliberately want to avoid reclamation by read-only threads due to some extremely rigid latency requirements. This scenario seems uncommon for general-purpose systems, and it would not improve the overall throughput anyhow.

Hyaline-1 and Hyaline-1S
are very portable and expose a relatively simple API. Hyaline and Hyaline-S provide full transparency but additionally require LL/SC or double-width CAS, which degrades portability. All Hyaline schemes simplify integration, e.g., it is much easier to register/unregister threads dynamically than with the aforementioned schemes. Garbage collectors have different trade-offs, and Hyaline's applicability in the corresponding applications is similar to that of EBR, HE, HP, and IBR.

\section{Correctness}
\label{sec:correctness}

We now prove correctness, lock-freedom, and robustness.

\begin{theorem}
All \hyaline{} variants are reclamation-safe.
\end{theorem}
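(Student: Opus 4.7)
The plan is to show that no node in any retired batch is freed while some thread can still access it. I would formalize reclamation safety as: for every batch $B$ retired at time $t_r$, \texttt{free\_batch}$(B)$ is invoked only after every thread $T$ that was in an active \texttt{enter}/\texttt{leave} critical section at time $t_r$ has exited that section. Given Hyaline's API model, this implies no thread holds a live pointer into any node of $B$ at the moment it is freed.

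First, I would prove the per-slot accounting lemma for the single-list version of Section~\ref{sec:scalar}. Because \texttt{enter} atomically pairs the \texttt{HRef} increment with \texttt{handle} $=$ \texttt{HPtr}, at the moment $B$ is inserted into slot $s$ with snapshot \texttt{HRef} $= c$, there are exactly $c$ active threads in $s$ whose handles are strictly older than $B$. Each such thread, upon calling \texttt{leave}, traverses from the then-current \texttt{HPtr} down to its handle and so decrements \texttt{NRef} of $B$'s batch. Conversely, every thread that enters after $B$'s insertion obtains handle $\ge B$ and stops before reaching $B$. The ``first node's counter is \texttt{HRef}'' trick rules out the ABA scenarios discussed in the Alternative Designs paragraph. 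Combined with the last-thread \texttt{adjust}$(\mathit{Curr},\mathit{Adjs})$ executed when \texttt{HRef} drops to zero in \texttt{leave}, this ensures \texttt{NRef} of $B$'s batch reaches the free condition \emph{exactly} after all obligated threads have either traversed past $B$ or vacated the slot.

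Next I would lift the argument to the scalable multi-slot version via the $\mathit{Adjs}$ accounting. The total value added to \texttt{NRef} across all $k$ slot insertions, plus the $\mathit{Empty}$ closeout and the per-slot last-thread adjusts, is $k\cdot\mathit{Adjs}+\sum_i\mathit{HRef}_i\equiv\sum_i\mathit{HRef}_i\pmod{2^N}$, so after $\sum_i\mathit{HRef}_i$ thread decrements \texttt{NRef} returns to $0$ and the batch is freed. The nontrivial step is ruling out a transient zero before all $k$ slot insertions complete: after $j<k$ insertions with $d$ decrements so far, \texttt{NRef} $\equiv j\cdot\mathit{Adjs}+(\text{partial }\mathit{HRef}\text{ sum})-d\pmod{2^N}$, and since $\mathit{Adjs}=\lfloor(2^N-1)/k\rfloor+1$ dwarfs any realistic count of live threads, this residue cannot vanish. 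Hyaline-1 is the clean special case where $\mathit{Adjs}$ collapses and a per-slot counter $\mathit{Inserts}$ suffices.

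For Hyaline-S and Hyaline-1S I would add the birth-era argument. The invariant is that \texttt{deref} monotonically raises \texttt{Accesses}$[s]$ to the current \texttt{AllocEra} before returning any pointer. Therefore, if \texttt{retire} skips slot $s$ because \texttt{Accesses}$[s]$ $<$ \texttt{batch->MinBirth}, every node in $B$ has a strictly larger \texttt{BirthEra} than any pointer returned by that slot's most recent \texttt{deref}, so no thread currently in $s$ can hold a pointer into $B$; safety is preserved for the skipped slots. I expect the main obstacle to be the ``no transient zero'' step in the multi-slot case, which requires pinning down a uniform bound on live threads relative to $\mathit{Adjs}$ and then checking the modular invariant under all interleavings of concurrent \texttt{retire}, \texttt{enter}, \texttt{leave}, and \texttt{traverse}; the per-slot and birth-era pieces are comparatively direct.
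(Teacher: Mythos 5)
Your overall skeleton matches the paper's argument: (i) a batch's counter ends up equal to the number of concurrent operations obligated to it, (ii) the $\mathit{Adjs}$ modular trick rules out a transient zero before all $k$ slot insertions and adjustments land (the paper compresses this into the remark that a premature decrement makes \texttt{NRef} ``a very large number'' and only a new value of exactly zero triggers \texttt{free\_batch}), and (iii) for \hyaline{}-S/1S, the monotone per-slot era invariant justifies skipping stale slots. Your birth-era paragraph and your no-transient-zero analysis are both essentially what the paper relies on.

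However, your per-slot accounting lemma mis-assigns the \texttt{HRef} snapshot, and the error would surface if you tried to prove the lemma as stated. In \textit{retire}, the snapshot $\mathtt{HRef}=c$ taken while inserting $B$ into slot $s$ is added (together with $\mathit{Adjs}$) to the \emph{predecessor} \texttt{Head.HPtr}, not to $B$; the quantity credited to $B$ for slot $s$ is the snapshot taken later, when $B$'s \emph{successor} is inserted into that slot (or just $\mathit{Adjs}$ from the last-leaving thread if the slot empties first, in which case no thread ever traverses to $B$ there). Correspondingly, your claim that a thread entering after $B$'s insertion ``obtains handle $\ge B$ and stops before reaching $B$'' is false for threads whose handle \emph{is} $B$: \textit{traverse} is inclusive of the handle node (the paper is explicit that a thread holds a reference ``through its Handle inclusively''), so such a thread does decrement $B$ --- and it must, because it is counted in the successor-insertion snapshot that credits $B$. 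With your version of the lemma the set of incrementers (threads active at $B$'s insertion) and the set of decrementers (threads that traverse to $B$) do not coincide, so the counter would not balance. The fix is to restate the lemma as: the value transferred to $B$'s \texttt{NRef} for slot $s$ equals the number of operations active in $s$ at the moment $B$ ceases to be the first node of $s$'s list, and these are exactly the operations whose traversal range contains $B$. With that correction the rest of your plan goes through and coincides with the paper's proof.
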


\begin{proof}
In a correct program, a retired batch cannot be accessed by subsequent operations. Only concurrent operations may still access it. Each of those concurrent operations starts by calling \hyaline{}'s \textit{enter}. Any batch retired during this concurrent execution will have its $\mathit{NRef}\not=1$ (Lines~26~and~39). If another thread executes \textit{leave} after Line 36 and before the last adjustment in Line 39, then it will start by decrementing the retired object's reference count such that it will be a very large number (Line~46). Only objects with a new reference count of zero are reclaimed (Lines~22~or~46). Thus, those retired objects with very large reference counts are safe from being reclaimed. After executing Line~39 across all slots where the batch is placed, the retired object's reference count will reflect the correct number of concurrent threads that have not executed \textit{leave} yet. Hence, the object will not be reclaimed until all these threads execute \textit{leave}.

\hyaline{}-(1)S, regardless of $\mathit{HRef}$ values, skips slots with eras
that are smaller than $\mathit{min\_birth}$ from a retired batch. $\mathit{min\_birth}$
signifies the oldest node in the batch.
\textit{deref} always updates per-slot eras to keep them in sync
with the global era clock. Thus, the retired batch must have been covered
by per-slot eras unless none of its nodes is ever dereferenced.
\end{proof}

\begin{theorem}
All \hyaline{} variants are lock-free. (With respect to CPU progress
only, see Theorem~\ref{theor:robust} for robustness.)
\end{theorem}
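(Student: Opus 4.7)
My plan is to proceed operation by operation across each of the four variants, isolating every potentially unbounded construct and showing that each is a standard lock-free pattern. The only candidates for non-progress are CAS retry loops and for/while loops; everything else is a straight-line sequence of reads, writes, swaps, and FAAs, which are wait-free by hardware assumption. So the proof reduces to (i) enumerating the loops, and (ii) showing for each that either the loop bound is finite a priori, or that every CAS failure witnesses the successful CAS of some other thread at the same memory location.

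First I would list the loops explicitly: the CAS do-while in \textit{leave} on \texttt{Heads[slot]} (Lines 7--14); the per-slot CAS do-while inside \textit{retire} (Lines 27--34); the \textit{touch} CAS do-while in \hyaline{}-S (Lines 20--22); the for-loop over \texttt{slot} $\in$ 0..k-1 in \textit{retire}; the \textit{traverse} walk over the retirement sublist; the \textit{deref} retry loop in \hyaline{}-(1)S; and the slot-scanning loop in \hyaline{}-S's \textit{enter}. For the three CAS do-whiles, I would argue the classical way: if thread $T$'s CAS on some location $L$ fails, then between $T$'s read of $L$ and its CAS, some other thread performed a successful CAS (or FAA/swap) on $L$, which is itself progress. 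Hence the system cannot have every thread looping indefinitely unless an infinite number of successful modifications have occurred. The for-loop in \textit{retire} iterates over a fixed $k$ (or, in the adaptive variant, a value of $k$ that is bounded by the number of threads and read once before the loop), and so contributes only a finite multiplier. \textit{traverse} walks a sublist whose length at the time of \textit{leave} is finite because only finitely many batches have been retired so far, and \textit{deref}'s loop terminates as soon as \texttt{Access} $=$ \texttt{Alloc}, which must happen after at most one additional \textit{touch} once the era clock stops advancing during the loop (and every advance of the era clock is itself progress by another thread).

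For \hyaline{}-1 and \hyaline{}-1S, the picture is simpler still: \textit{enter} and \textit{leave} use only FAA/swap/write, so they are wait-free, and \textit{retire} has the same structure as basic Hyaline's \textit{retire} minus the $\mathit{Adjs}$ accounting, so the same argument applies. The \textit{adjust} and \textit{free\_batch} helpers perform a single FAA followed, at most once per batch retirement, by a bounded traversal of the cyclic \texttt{BatchNext} list, which has fixed finite length equal to the batch size. The slot-scanning loop in \hyaline{}-S's \textit{enter} is bounded by $k$ and, in the adaptive-resizing variant, the failure to find a slot triggers allocation of a larger array, which can happen only finitely often before the number of allocated slots exceeds the number of stalled threads.

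The step I expect to be the main obstacle is ruling out a pathological interaction between \textit{retire}'s inner CAS loop and concurrent \textit{enter}/\textit{leave}/\textit{retire} calls on the same slot: in principle, each successful concurrent operation forces the current thread to restart, so one must argue that the current thread's failure is paid for by another thread's success and not by the current thread's own repeated observations. Here I would carefully fix a single slot and note that every CAS success on \texttt{Heads[slot]} monotonically advances a well-defined notion of progress (either an \texttt{HRef} change or a new \texttt{HPtr} being published), and that no thread performs infinitely many successful CASes on that slot without completing its own operation. Combined with the outer for-loop being finite, this closes the argument and yields CPU-progress lock-freedom for all four variants; memory-side progress (robustness) is deferred to Theorem~\ref{theor:robust}.
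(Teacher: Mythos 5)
Your proposal is correct and follows essentially the same route as the paper's proof: enumerate the unbounded loops, show that each CAS retry (in \textit{leave}, \textit{retire}, and \textit{touch}) or era-clock advance (in \textit{deref}) witnesses a successful step by another thread, and observe that the remaining loops (\textit{traverse}, the per-slot for-loop, batch traversals) are finitely bounded. You are somewhat more thorough than the paper --- e.g., you also treat the slot-scanning loop in \hyaline{}-S's \textit{enter} and the wait-freedom of \hyaline{}-1's \textit{enter}/\textit{leave}, which the paper's brief proof leaves implicit --- but the core argument is identical.
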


\begin{proof}
\hyaline{} has two unbounded loops (Lines~7-15 and 28-36). If the CAS operation fails in the first loop (Lines~7-15) causing it to repeat, it means that  \texttt{Head} is changed by another thread executing \textit{enter}, \textit{leave}, or \textit{retire} in the same slot. Thus, that other thread is making progress -- i.e., successfully executing \textit{enter}, \textit{leave}, or \textit{retire} in the same slot, and finishing modification of the same \texttt{Head}. The same argument applies to the second loop (Lines~28-36).
The loop in \textit{traverse} is bounded by the number of batches retired between executing \textit{enter} and \textit{leave}, and
this number is finite.

\hyaline{}-S (Figure~\ref{alg:hyalines}) has two additional loops
(Lines~7-11 and 20-23). If CAS fails in \textit{touch} (Lines~20-23) causing
it to repeat, it means that another thread calling \textit{touch} succeeds.
The other loop (Lines~7-11) converges unless the global era clock is
incremented. In the latter case, another thread is making progress, i.e.,
initializes a new node in Line~17.
\end{proof}

\begin{theorem}
\hyaline{} and \hyaline{}-S have $O(n/k)$ reclamation cost.
\end{theorem}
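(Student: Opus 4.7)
The plan is to amortize the total cost of reclaiming a full batch over the nodes it contains, exploiting the algorithm's invariant (stated in Section~\ref{sec:vectorized-smr}) that each batch has strictly more nodes than slots, i.e., $B \ge k+1$. I would define the reclamation cost per node to be the total CPU work (across all threads) charged to a batch, divided by $B$, and then show this is $O(1 + n/k) = O(n/k)$ whenever $n \ge k$.

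First I would account for the work associated with one batch. \textit{Retire} iterates over all $k$ slots once (Lines 27--38) and, together with the final adjustment in Line 39, spends $O(k)$ work on insertion. Each active thread eventually removes exactly one node of this batch from its slot's sublist in \textit{traverse} (Lines 41--48), performing $O(1)$ work (a load, a FAA, a comparison) per encountered node. Summed across all slots and threads, this contributes $O(\sum_i h_i) = O(n)$ work, where $h_i$ denotes the number of threads active in slot $i$ and $n = \sum_i h_i$. Finally, exactly one thread executes \textit{free\_batch}, which walks the $B$ nodes once, contributing $O(B)$. The total is $O(k + n + B) = O(k + n)$. Dividing by $B \ge k+1$ yields the amortized per-node cost $O\!\bigl((k+n)/k\bigr) = O(1 + n/k) = O(n/k)$.

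For \hyaline{}-S I would show that the extensions add only $O(1)$ per-node or per-slot overhead that folds into the same bound: \textit{init\_node} does $O(1)$ work per allocation; the FAA into \texttt{Acks[slot]} in \textit{retire} adds $O(1)$ per slot (absorbed in the $O(k)$ insertion term); and the FAA on \texttt{Acks} in \textit{traverse} adds $O(1)$ per leaving thread (absorbed in the $O(n)$ term). The era-skip check in \textit{retire} may reduce the number of touched slots but cannot increase it, so the same $O(k+n)$ total bound is preserved. The slot-hunting loop in \textit{enter} is bounded by $k$ but is charged to thread registration, not to reclamation, so it does not enter the reclamation-cost accounting.

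The subtle step is pinning down the right amortization window and making sure no thread is charged work superlinear in the sublist length it observes. The main obstacle is arguing that \textit{traverse}'s work is truly $O(1)$ per batch-node it sees and that the sublist a thread crosses in \textit{leave} contains at most one node per retired batch in its slot (guaranteed because a batch contributes one \texttt{Next}-linked node to each slot's list). Once this is established, the per-batch accounting above is tight, and the $O(n/k)$ amortized bound follows directly from the $B \ge k+1$ invariant.
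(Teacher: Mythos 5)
Your proposal is correct and follows essentially the same argument as the paper: amortize the $O(k)$ insertion work and the $O(n)$ reference-counter decrements for each batch over its $\ge k+1$ nodes, yielding $O(1)$ direct cost and $O(n/k)$ indirect cost per retired node. Your accounting is somewhat more explicit (you also charge \textit{free\_batch} and the \hyaline{}-S bookkeeping, and note the $n \ge k$ assumption), but the decomposition and the key use of the batch-size invariant are the same.
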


\begin{proof}
The reclamation cost in Hyaline consists of two parts: 1) the direct cost of \textit{retire} and 2) the cost of \textit{retire} incurred later, during list traversal in \textit{leave}.

Retiring is a simple $O(1)$ linked-list (batch) insertion operation.
Upon reaching the maximum batch size, $s$, \textit{retire} inserts the batch into
slots with active threads. Since the number of slots is $k$, $s\ge k+1$. Each batch is inserted into at most $k$ slots after $s$ per-node
\textit{retire} calls, making the average cost of \textit{retire} $O(1)$.

The list traversal takes places for all batches retired between \textit{enter} and \textit{leave}.
A batch is retired after $s\ge k+1$ \textit{retire} calls (for individual nodes). Each
batch maintains a single reference counter per $\ge (k+1)$ nodes.
The batch's reference counter
needs to be decremented by all active threads (at most, $n$ threads).
Thus, the average cost to update
a reference counter \textit{per node} (i.e., the indirect cost of a single \textit{retire} incurred in \textit{leave}) is $O(n\times \frac{1}{k+1})=O(n/k)$.
\end{proof}

\begin{theorem}
\hyaline{}-1 and \hyaline{}-1S have $O(1)$ reclamation cost.
\end{theorem}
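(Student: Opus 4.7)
The plan is to derive this as essentially a corollary of the previous theorem, by exploiting the defining structural property of the single-width-CAS variants: Hyaline-1 and Hyaline-1S maintain a strict 1:1 thread-to-slot mapping (each thread allocates its own unique slot), so $k$ equals the total number of threads $n$. Substituting into the previous bound gives $O(n/k) = O(n/n) = O(1)$. The proof therefore proceeds by carefully repeating the two-part decomposition (direct \textit{retire} cost and indirect cost incurred later in \textit{leave}) and verifying that each still holds under the simplified Hyaline-1 mechanics.

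First, I would handle the direct \textit{retire} cost. In Hyaline-1 the adjustment bookkeeping is simpler than in Hyaline: instead of running the $\mathit{Adjs}$/$\mathit{Empty}$ machinery across predecessors and empty slots, \textit{retire} merely maintains an $\mathit{Inserts}$ counter and adjusts the current batch's $\mathit{NRef}$ by $\mathit{Inserts}$ after inserting into all non-empty slots (replacement lines \#2\# and \#3\# in Figure~\ref{alg:hyaline1}). Each per-slot insertion is $O(1)$ work and, since batch size $s \ge k+1 = n+1$, the batch performs at most $k = n$ slot insertions amortized over $\ge n+1$ node retirements. Thus the amortized per-node direct cost of \textit{retire} is $O(k/s) = O(1)$.

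Second, I would handle the indirect cost paid later in \textit{leave}. In Hyaline-1, \textit{leave} is wait-free: it SWAPs the slot's \texttt{Head} and hands the detached sublist to \textit{traverse}. Each batch carries a single reference counter that must be decremented once per active thread that was present when it was inserted, giving at most $n$ decrements per batch. Since each batch represents $\ge k+1 = n+1$ retired nodes, the per-node indirect cost is $O(n/(n+1)) = O(1)$. The same argument transfers verbatim to Hyaline-1S, because the extensions in Figure~\ref{alg:hyalines} (era-based skipping in \textit{retire}, the \texttt{Ack} decrement appended to \textit{traverse}, and the regular write in \textit{touch}) each add only $O(1)$ work per node processed, and they do not change $s$ or $k$.

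The main (and really only) obstacle is a bookkeeping one: I must be careful that the ``$k = n$'' identification is actually meaningful, since $n$ is simultaneously the number of concurrent threads used in the $O(n/k)$ bound of the previous theorem and, here, the number of slots. I would justify this by pointing out that the upper bound ``at most $n$ decrements per batch'' comes precisely from the number of threads that could have entered any of the slots into which the batch was inserted; with the 1:1 mapping this count is trivially $k$. Once that identification is made explicit, the constants follow with no further calculation.
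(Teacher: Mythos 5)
Your proposal is correct and follows the same route as the paper, which likewise derives the result as a corollary of the preceding theorem by observing that Hyaline-1(-S) is the special case $k=n$, so $O(n/k)=O(1)$. Your additional re-verification of the two-part cost decomposition under the simplified Hyaline-1 mechanics is sound but not needed beyond what the paper states.
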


\begin{proof}
\hyaline{}-1 and -1S are special cases of \hyaline{} and \hyaline{}-S, where $k=n$ (i.e., the number of threads equals to the number of slots). Thus, the reclamation cost is $O(1)$.
\end{proof}

\begin{theorem}
\hyaline{}-S and \hyaline{}-1S are robust.\footnote{
As in IBR, we consider only stalled
threads -- i.e.,  threads that are stopped  
indefinitely as opposed to threads that are simply paused briefly.
Also, as in IBR, ``starved'' threads that are running but unable
to make any progress can still potentially reserve an
unbounded number of objects; this is prevented by bounding the number 
of CAS failures in data structure operations and restarting from the very
beginning (implemented by Section~\ref{sec:eval}'s benchmark).}
\label{theor:robust}
\end{theorem}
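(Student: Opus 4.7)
The plan is to prove robustness by bounding, for each stalled thread, the set of retired batches whose reclamation it can delay, and then summing over the finite set of stalled threads. The central mechanism is the birth-era guard in \textit{retire} (the ``$\mathit{Access} < \mathit{MinBirth}$'' test): any batch born later than a slot's current access era skips the slot and therefore escapes any contribution from threads lingering in that slot.

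First I would handle \hyaline{}-1S, where each thread owns its own slot and writes its access era with a plain memory write. When thread $t$ stalls at instant $\tau_t$, the per-slot era $\mathit{Accesses}[t]$ freezes at some value $E_t$ (or remains $0$ if $t$ stalls before any dereference). Every subsequently retired batch with $\mathit{MinBirth} > E_t$ bypasses the slot (Line~13 of Figure~\ref{alg:hyalines}) and does not accumulate $t$ into its $\mathit{NRef}$. Hence $t$ can delay reclamation only of objects born by era $E_t$, and those objects were already allocated by $\tau_t$; their number is finite. Summing over stalled threads yields a finite memory bound.

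For \hyaline{}-S the argument is subtler because a slot $i$ may be shared, and an active co-occupant can still advance $\mathit{Accesses}[i]$ via \textit{touch} after some threads in $i$ stall. I would proceed in two phases. In the \emph{transition phase} following a stall, each retirement into $i$ adds $\mathit{HRef}[i]$ to $\mathit{Ack}[i]$, but stalled threads never execute the compensating decrement in \textit{traverse}, so $\mathit{Ack}[i]$ crosses its threshold after finitely many retirements. From that moment, \textit{enter} refuses slot $i$, either scanning for another slot or triggering the directory resize of Section~\ref{sec:robust}; once the remaining active occupants of $i$ finish their current operations and call \textit{leave}, slot $i$ contains only stalled threads and $\mathit{Accesses}[i]$ is frozen at some $E_i$. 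The birth-era guard then skips slot $i$ for every batch with $\mathit{MinBirth} > E_i$, yielding a slot-wise bound identical to the \hyaline{}-1S case, plus the bounded transition overhead. The number of directory doublings is at most $\log_2(s/\mathit{Kmin})$ where $s$ is the (finite) total number of stalled threads, so the directory itself stays bounded and active threads always locate a clean slot.

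The main obstacle I anticipate is making precise the transition-phase argument for \hyaline{}-S, specifically showing that the set of batches added to slot $i$ before $\mathit{Accesses}[i]$ finally freezes is finite. This requires coupling (a) the monotone growth of $\mathit{Ack}[i]$ under stalled occupancy with (b) the bounded number of \textit{retire} calls that can target slot $i$ before its $\mathit{Ack}$ crosses threshold, and then (c) the eventual departure of all active co-occupants of $i$ via \textit{leave}. Care is also needed to avoid double-counting the effect of a thread that initially shares a slot with stalled threads and later migrates elsewhere, but this follows because each thread's \textit{leave} decrements its contribution to $\mathit{Ack}$ before the thread rejoins a different slot.
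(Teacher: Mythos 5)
Your proposal is correct and follows essentially the same route as the paper's own proof: the birth-era guard in \textit{retire} excludes newly born batches from slots whose access eras have frozen, the \texttt{Ack} counter forces active threads to abandon stalled slots after finitely many retirements (the paper compresses your ``transition phase'' into a without-loss-of-generality assumption that stalled slots reference no active threads), and the finiteness of nodes born at or before the frozen eras bounds the unreclaimable memory. The only substantive differences are that the paper states the bound explicitly as $\mathit{\delta Era}\times\mathit{Freq}\times n(k+1)$ while you argue finiteness qualitatively, and your claim that every object born by era $E_t$ is already allocated by $\tau_t$ is slightly imprecise (up to roughly $n\times\mathit{Freq}$ further allocations can still receive a birth era $\le E_t$ before the global clock advances), though this does not affect the conclusion.
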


\begin{proof}
Since slots with stalled threads are detected after
a finite number of retire calls and avoided by active threads in
their following operations, we assume, without
loss of generality, that slots do not reference any
active threads.
(Although batches are potentially added to every slot, one stalled thread can only make unusable the slot which was used by the last \textit{enter} operation of the stalled thread. Only this slot references this thread. Newly allocated nodes will skip this slot due to its stale era and consequently will not reference the stalled thread when these nodes are retired.)

Since threads update their per-slot eras in monotonically increasing
order when calling \textit{deref}, each slot $\mathit{i}$ ends up with
some era $\mathit{A_{i}}$ when it contains only stalled threads.
Let $\mathit{Era_{max}}=\mathit{max}(A_{i})$ across all slots $i$
with stalled threads. We use $\mathit{E_{i}}$ to denote a
global era clock value when the earliest stalled thread from slot $i$ entered.
All previously retired nodes must have been retired before (or at)
$\mathit{E_{i}}$. Let $\mathit{\delta Era}=\mathit{Era_{max}} - \mathit{min}(E_{i})$ across all $\mathit{i}$ slots with stalled threads.
All potentially unreclaimable batches will have their $\mathit{min\_birth}\le \mathit{Era_{max}}$ (Line~14 of Figure~\ref{alg:hyalines}). As each thread periodically increments the era value, a number of unreclaimable
batches is bounded by $\mathit{\delta Era}\times \mathit{Freq}\times n$, where $n$ is the number
of threads and $\mathit{Freq}$ is the frequency used in the algorithm.
Batch sizes can be capped by $k+1$, where
$k$ is the number of slots.
Thus, the number of unreclaimable nodes is bounded by
$\mathit{\delta Era}\times \mathit{Freq}\times n(k+1), k\le n$.
\end{proof}

\section{Evaluation}
\label{sec:eval}

We used and extended the test framework of~\cite{IntervalBased} to support Hyaline.
The framework consists of four benchmarks representing different data structures: the sorted linked-list~\cite{harrisList, hazardPointers}, lock-free hash map~\cite{hazardPointers}, a variant of the Bonsai Tree~\cite{bonsaiTree}, which is a self-balancing lock-free binary tree, and Natarajan and Mittal's binary tree~\cite{natarajanTree}.

We run our tests for up to 144 threads on a 72-core machine consisting
of four Intel Xeon E7-8880~v3 2.30~GHz (45MB~L3 cache) CPUs with hyper-threading disabled and 128GB of RAM. We
chose Clang 11.0.1 with the {\tt -O3} optimization flag due to 
its better support of double-width RMW as used by \hyaline{}.
We saw no visible difference
between GCC and Clang for existing algorithms. We used
jemalloc~\cite{jemalloc} to alleviate the standard library malloc's poor performance~\cite{ALLOCATORS}.

Since a number of different techniques exist, we focus on well-established
or state-of-the-art algorithmic schemes that have similar properties or
programming models as \hyaline{}. We do not evaluate classical reference counting because it uses an intrusive model and is already known to be slower than other evaluated schemes. We skip OS-based approaches since they are inevitably blocking. We skip PEBR~\cite{PEBR} due to
significant API differences. We note that PEBR authors only compare against EBR, and PEBR's performance appears to be 85-90\% of EBR's, worse than that of Hyaline.
Since \hyaline{} aims to achieve excellent throughput while also retaining good memory efficiency, we are comparing against schemes with excellent throughput, such as epoch-based reclamation, and excellent memory efficiency, such as hazard pointers.

We compare all four \hyaline{} variants against:
\begin{description}
\item[\textbf{HP}] -- hazard pointers~\cite{hazardPointers}.
\item[\textbf{HE}] -- hazard eras~\cite{hazardEra}.
\item[\textbf{IBR}] -- the interval-based technique 2GE-IBR~\cite{IntervalBased}.
\item[\textbf{Epoch}] -- a variant~\cite{IntervalBased} of the epoch-based approach.\footnote{This variant has an advantage over the original EBR~\cite{epoch1,epoch2} in that it increments the epoch counter unconditionally, but it has to place all retired nodes in one per-thread list. Both approaches exhibit good performance.}
\item[\textbf{No MM}] -- running the test without any memory reclamation, which serves as a general baseline.
\end{description}

In the results, it is more fair to compare \hyaline{} and \hyaline{}-1 against (non-robust) Epoch, and \hyaline{}-S and \hyaline{}-1S against (robust) HP, IBR, and HE.

The original benchmark code we used~\cite{IntervalBased} implemented snapshots only for IBR. HP and HE were suboptimal due to excessive cache misses when scanning lists of retired nodes. We modified these implementations accordingly. EBR and Hyaline are snapshot-free and do not need this
optimization.

\begin{figure*}[ht]
\begin{subfigure}{.333\textwidth}
\includegraphics[width=.99\textwidth]{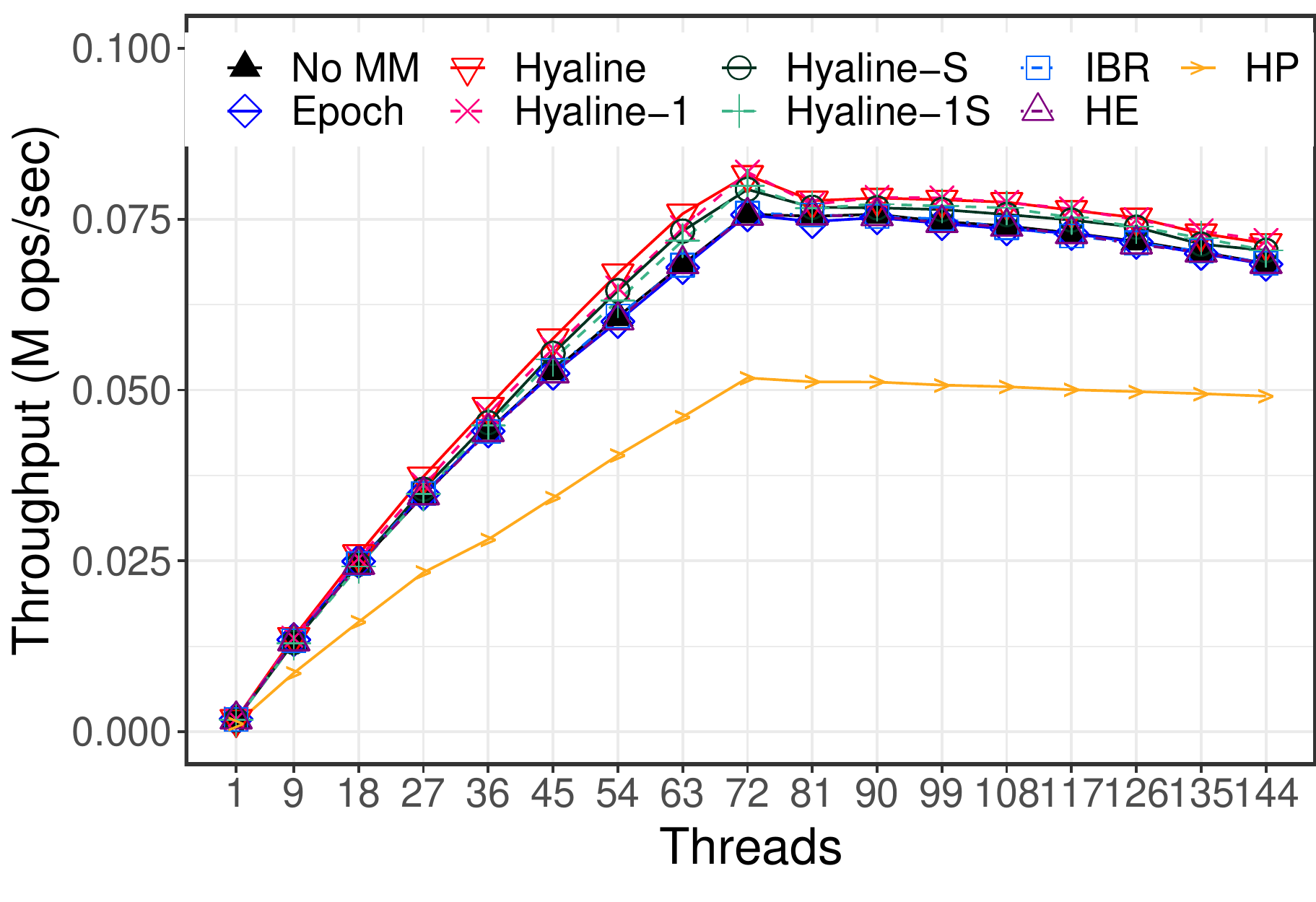}
\vspace{-5pt}
\caption{Harris \& Michael list (write)}
\label{fig:list_thru}
\end{subfigure}%
\begin{subfigure}{.333\textwidth}
\includegraphics[width=.99\textwidth]{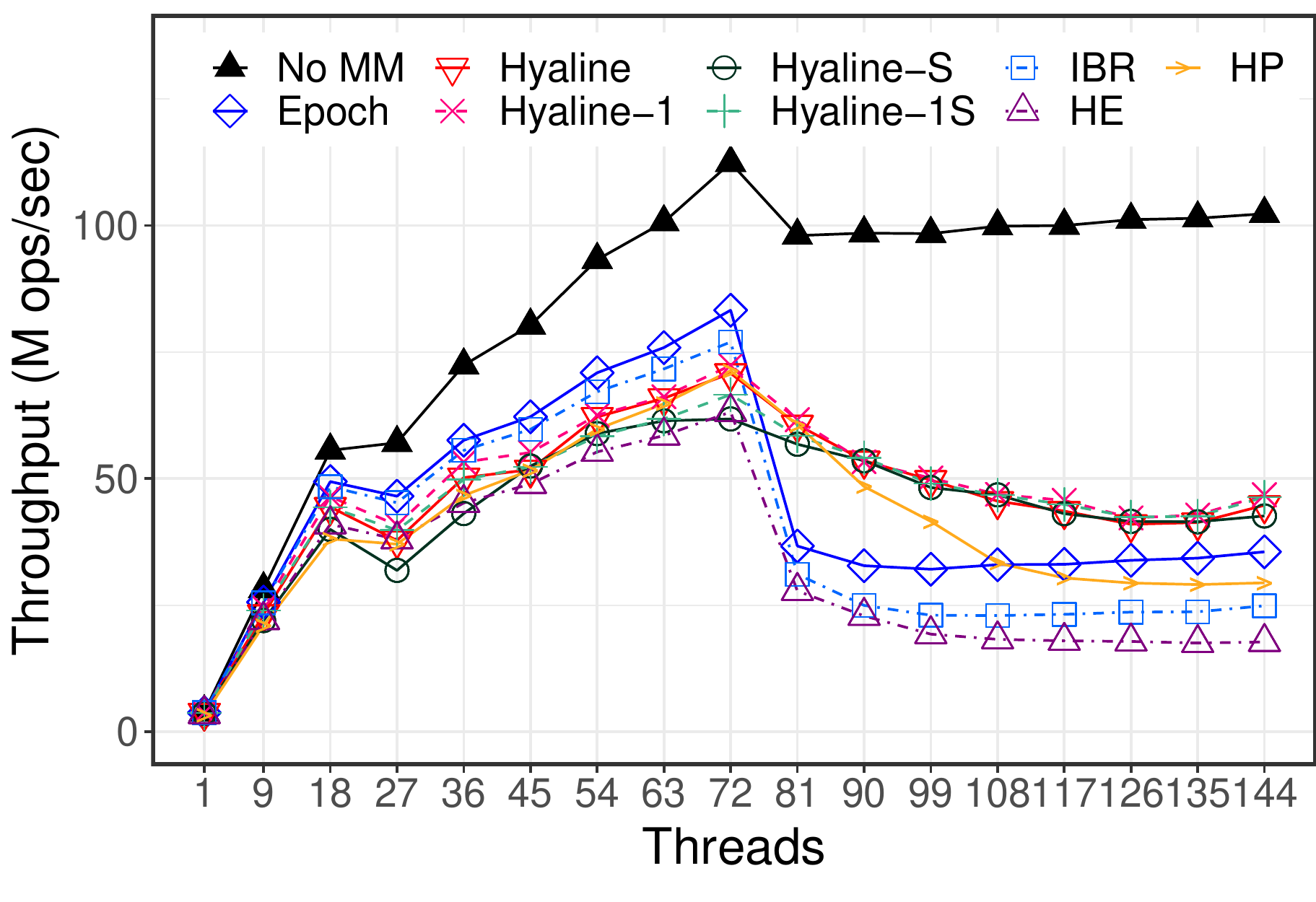}
\vspace{-5pt}
\caption{Michael hash map (write)}
\label{fig:hash_thru}
\end{subfigure}%
\begin{subfigure}{.333\textwidth}
\includegraphics[width=.99\textwidth]{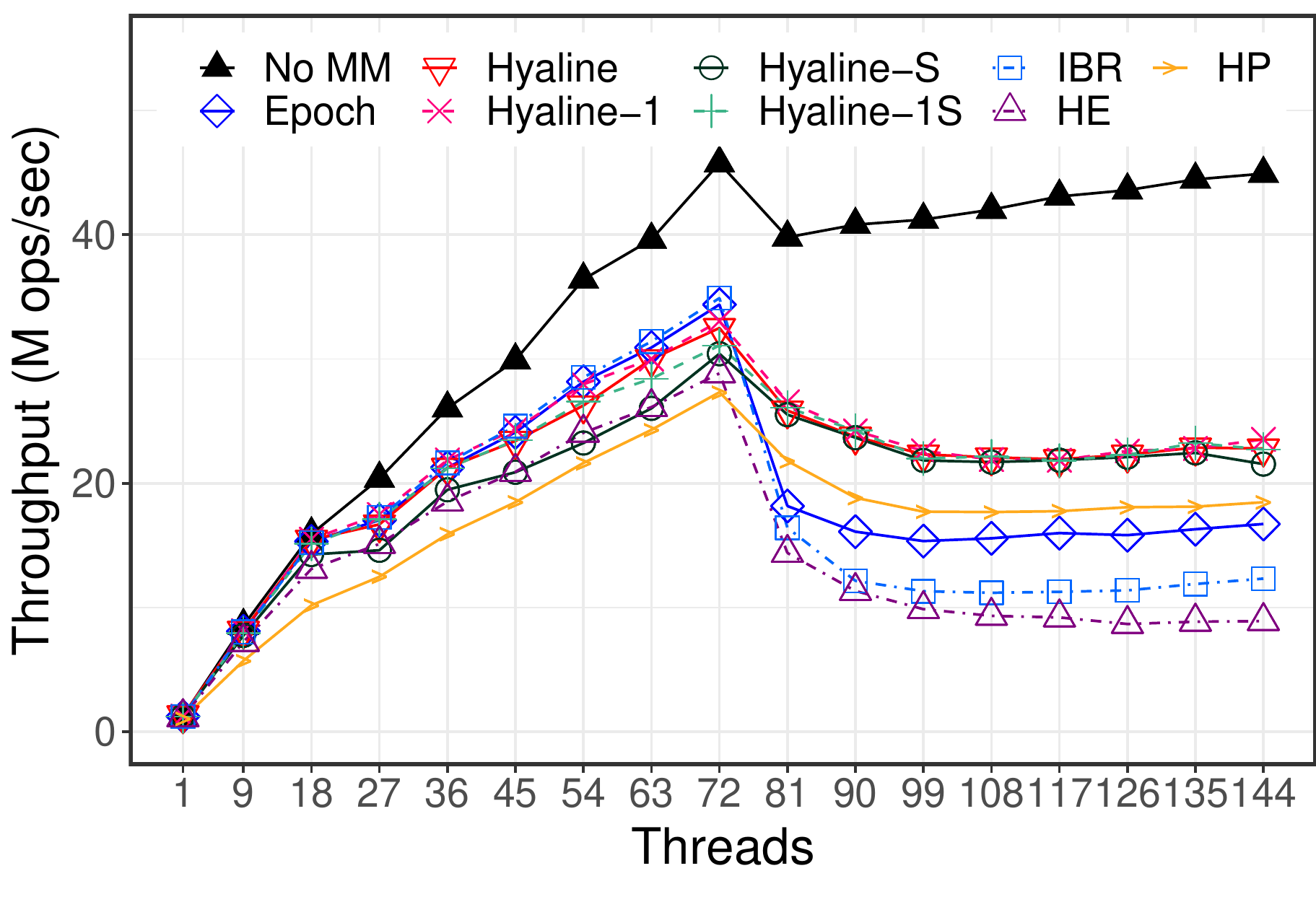}
\vspace{-5pt}
\caption{Natarajan \& Mittal tree (write)}
\label{fig:natarajan_thru}
\end{subfigure}%
\\
\begin{subfigure}{.333\textwidth}
\includegraphics[width=.99\textwidth]{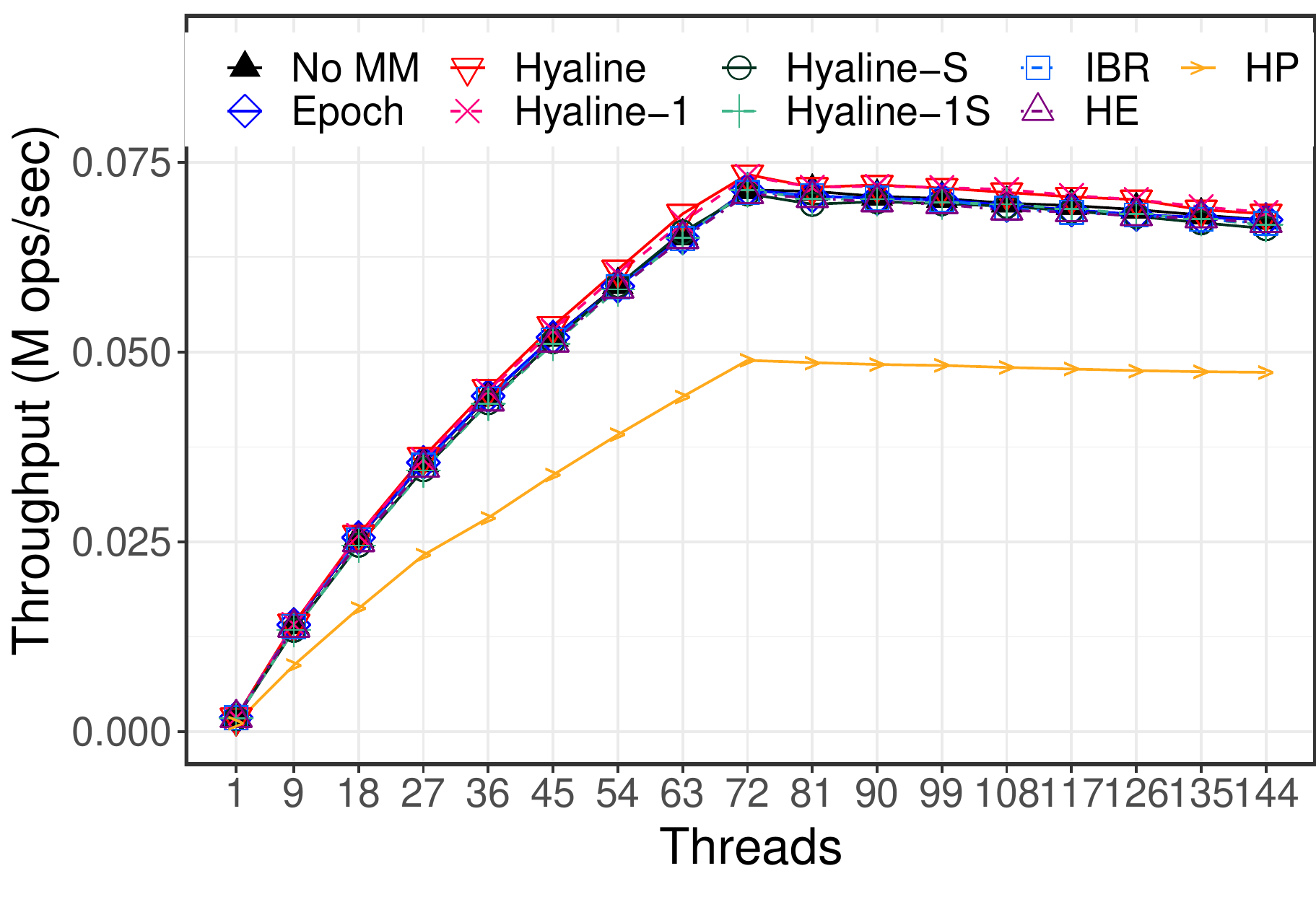}
\vspace{-5pt}
\caption{Harris \& Michael list (read)}
\label{fig:list_thru_read}
\end{subfigure}%
\begin{subfigure}{.333\textwidth}
\includegraphics[width=.99\textwidth]{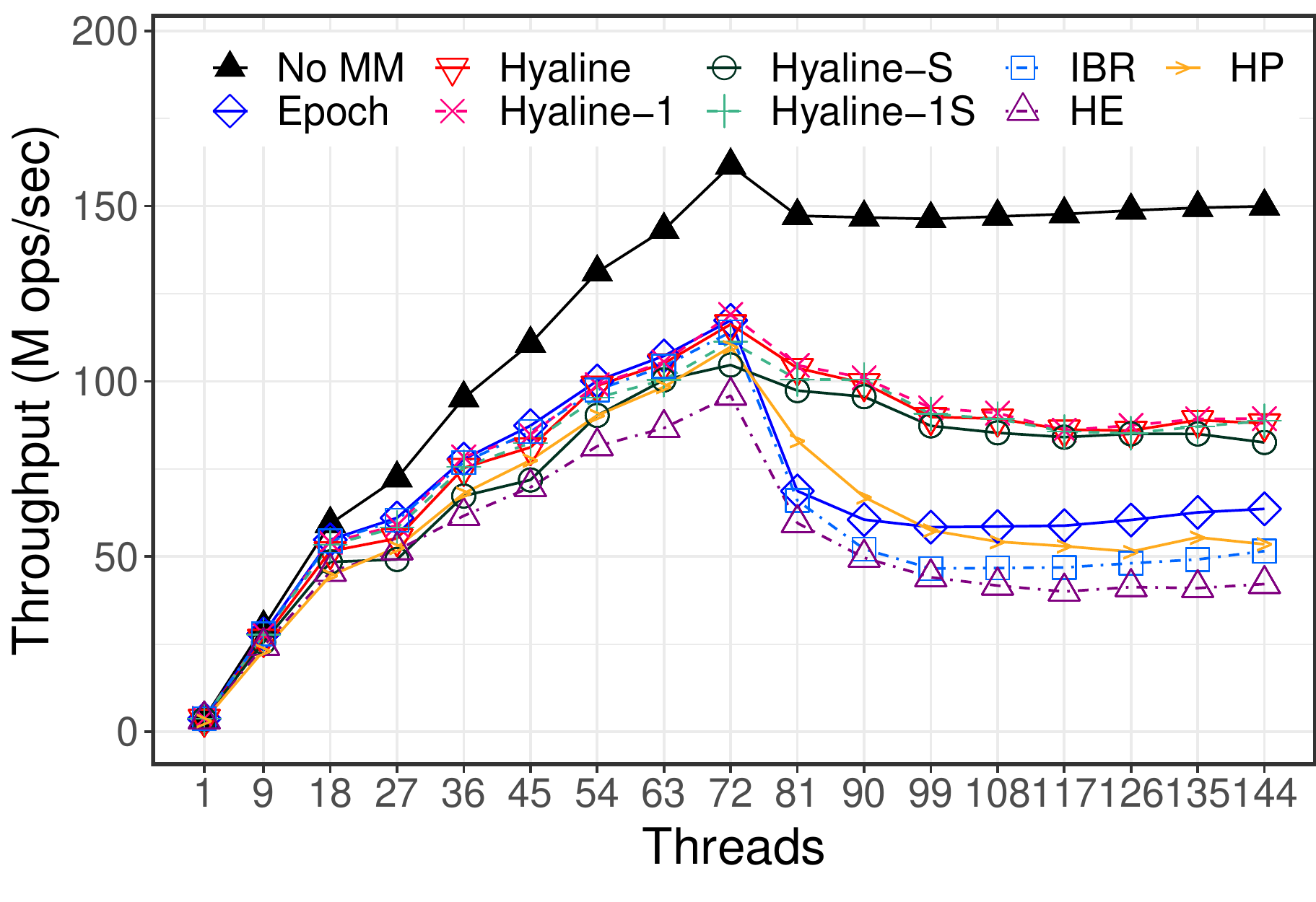}
\vspace{-5pt}
\caption{Michael hash map (read)}
\label{fig:hash_thru_read}
\end{subfigure}%
\begin{subfigure}{.333\textwidth}
\includegraphics[width=.99\textwidth]{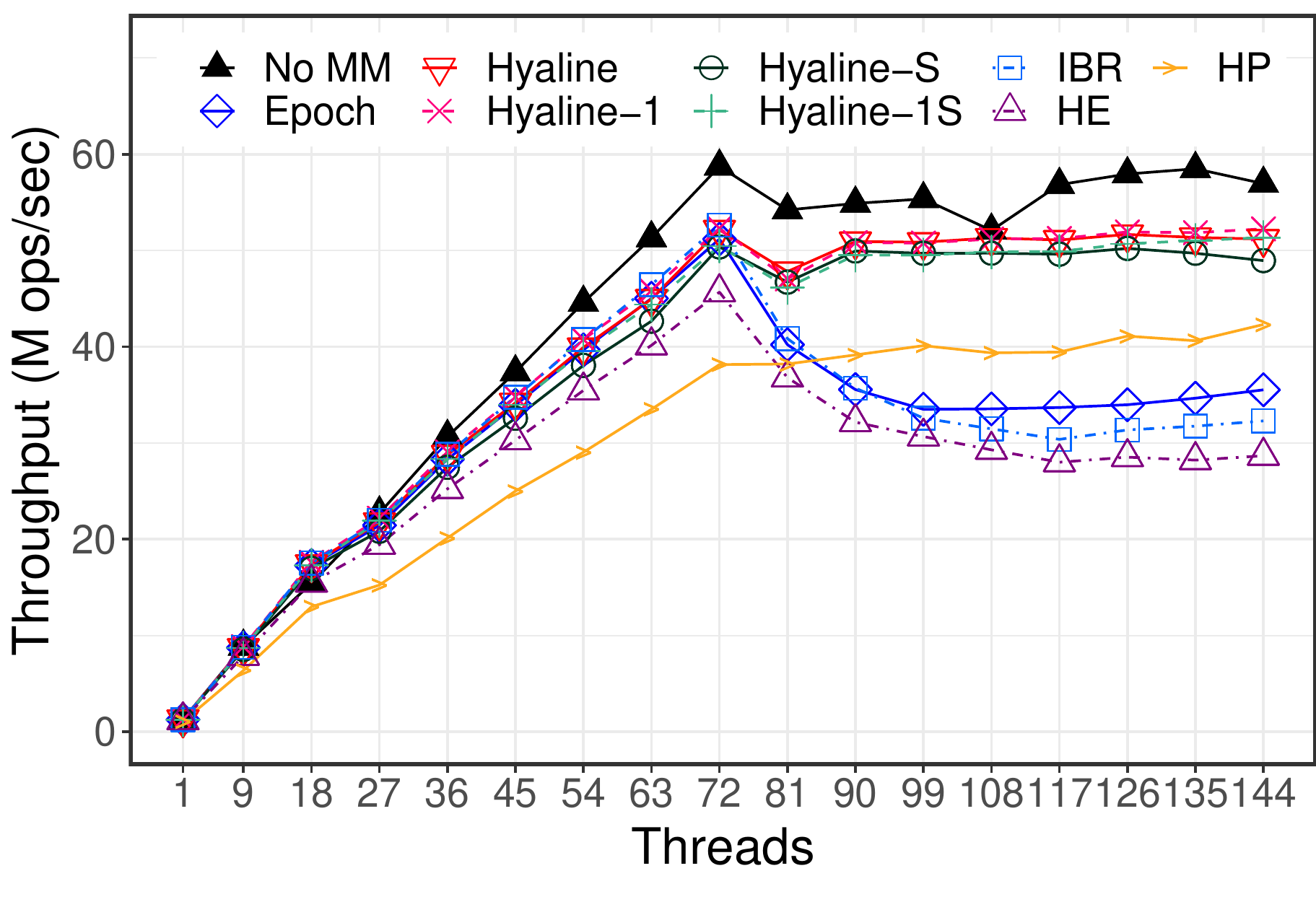}
\vspace{-5pt}
\caption{Natarajan \& Mittal tree (read)}
\label{fig:natarajan_thru_read}
\end{subfigure}%
\caption{Throughput (higher is better).}
\label{fig:thru}
\end{figure*}

\begin{figure*}[ht]
\begin{subfigure}{.333\textwidth}
\includegraphics[width=.99\textwidth]{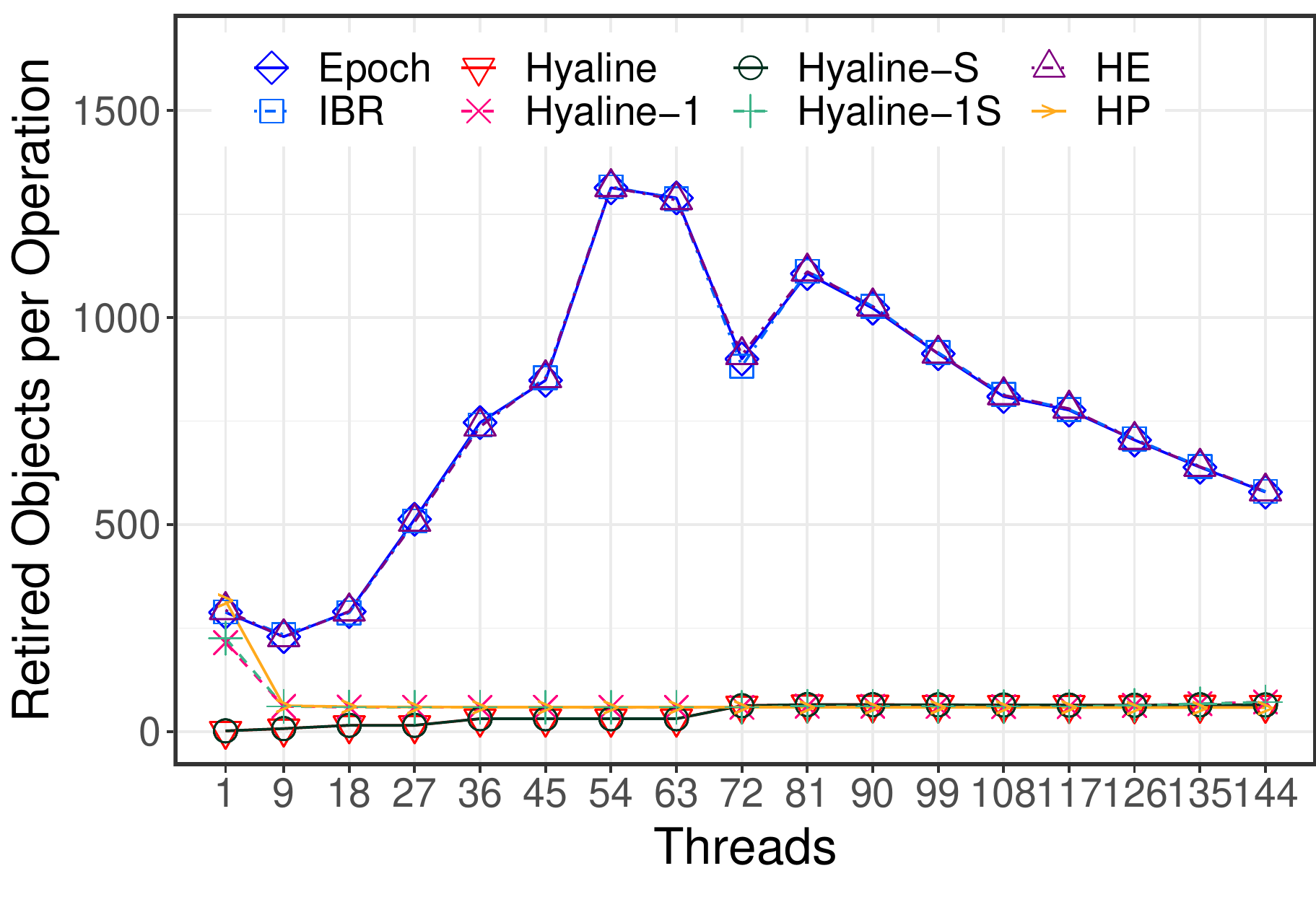}
\vspace{-5pt}
\caption{Harris \& Michael list (write)}
\label{fig:list_unrec}
\end{subfigure}%
\begin{subfigure}{.333\textwidth}
\includegraphics[width=.99\textwidth]{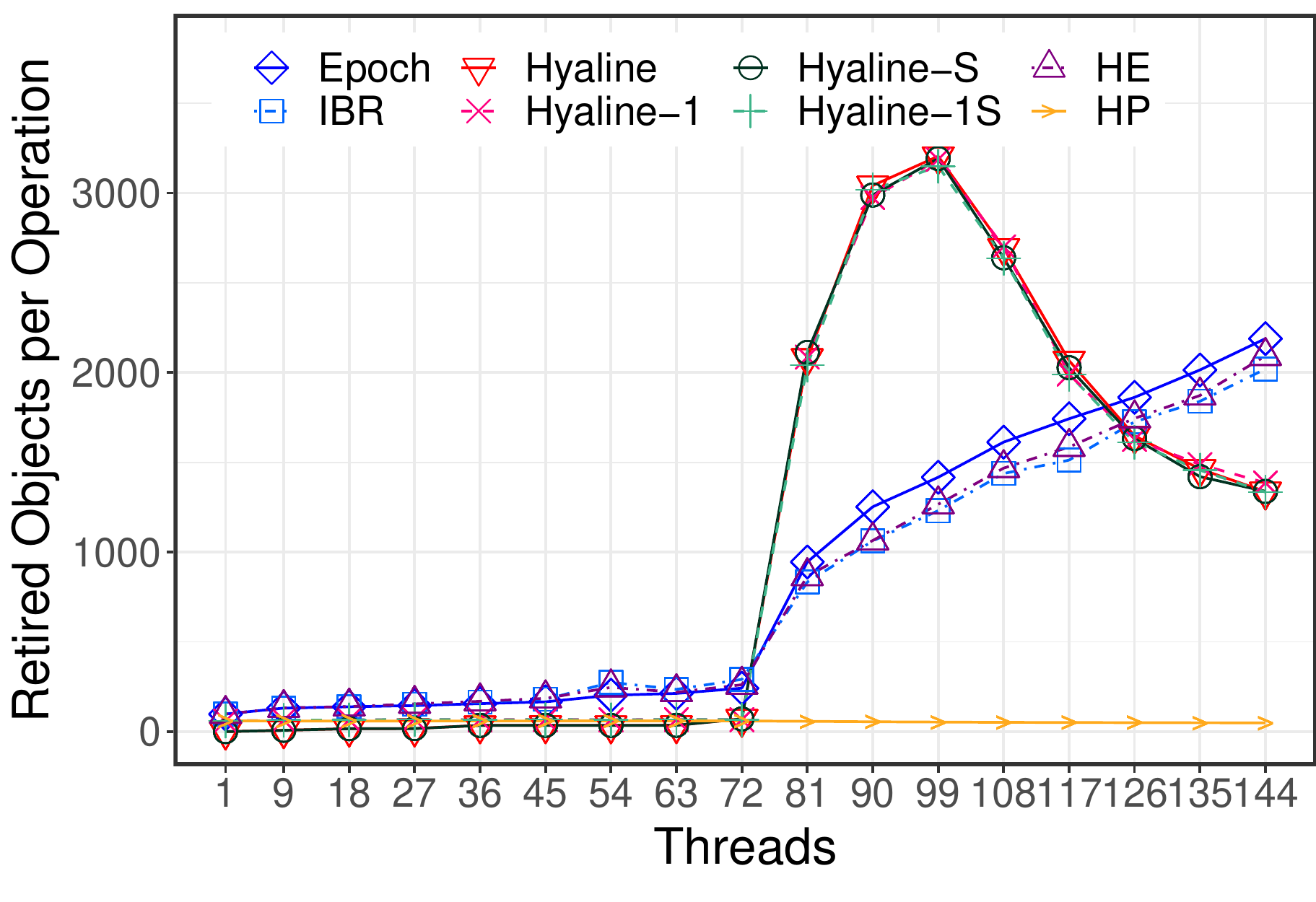}
\vspace{-5pt}
\caption{Michael hash map (write)}
\label{fig:hash_unrec}
\end{subfigure}%
\begin{subfigure}{.333\textwidth}
\includegraphics[width=.99\textwidth]{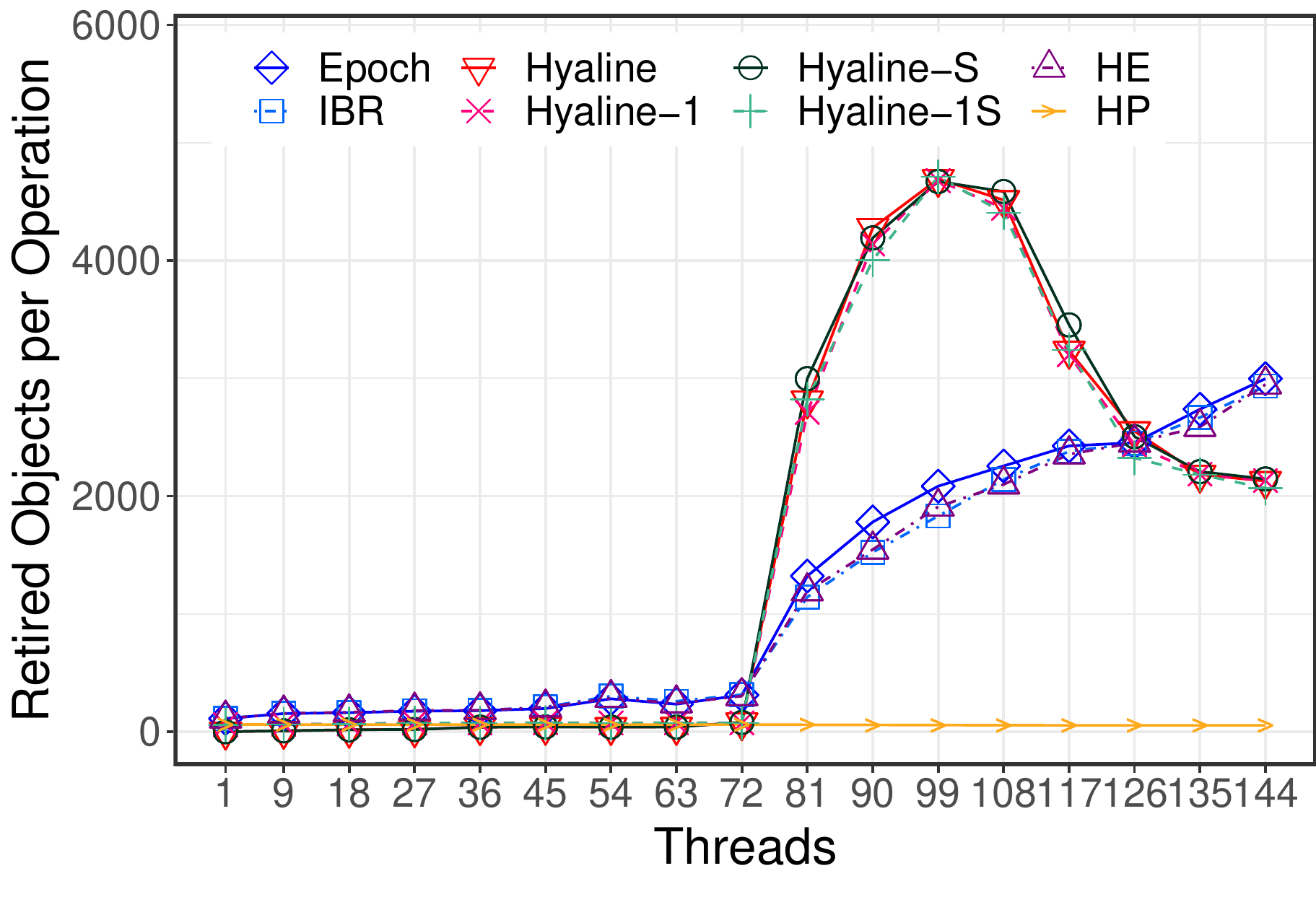}
\vspace{-5pt}
\caption{Natarajan \& Mittal tree (write)}
\label{fig:natarajan_unrec}
\end{subfigure}%
\\
\begin{subfigure}{.333\textwidth}
\includegraphics[width=.99\textwidth]{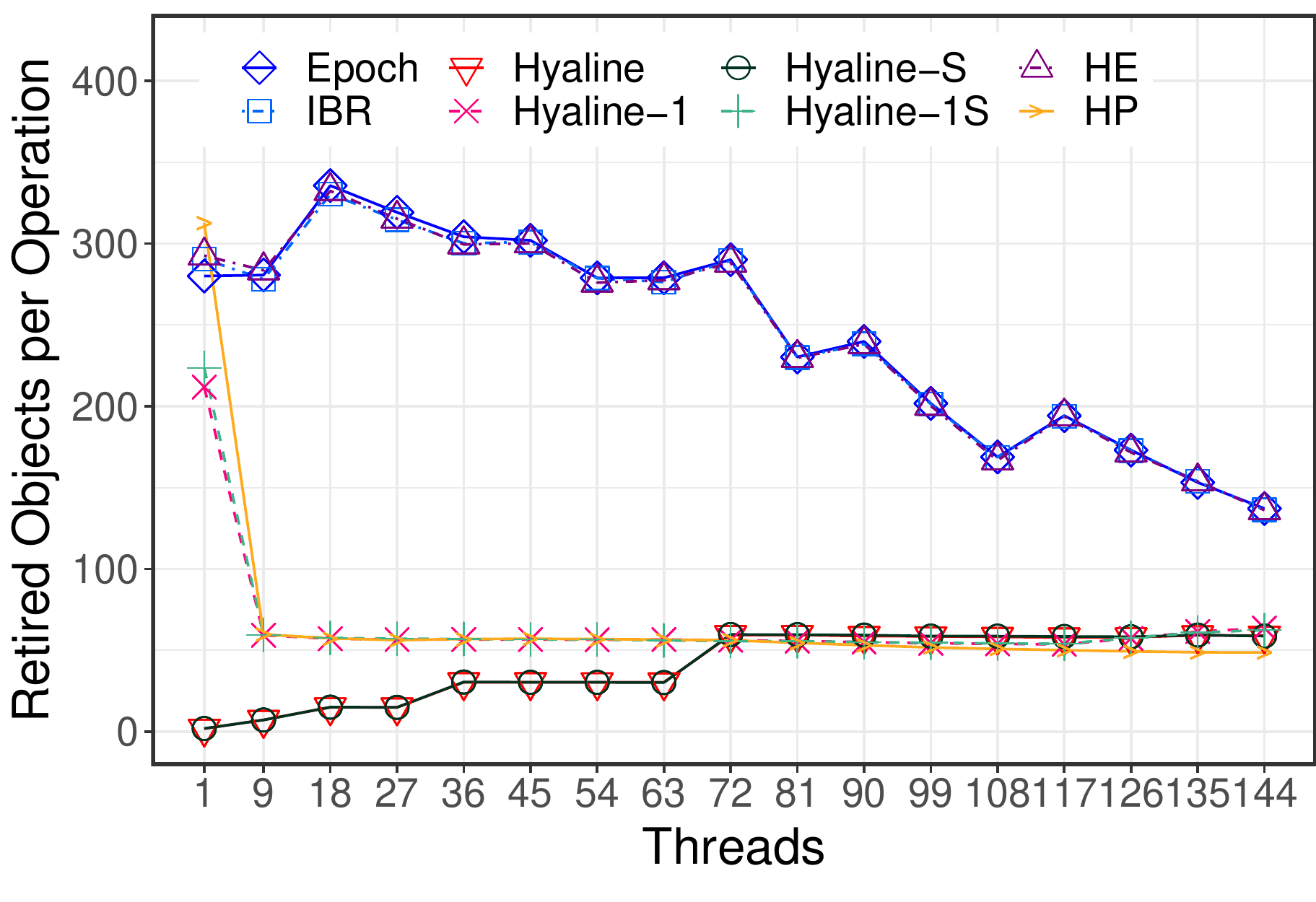}
\vspace{-5pt}
\caption{Harris \& Michael list (read)}
\label{fig:list_unrec_read}
\end{subfigure}%
\begin{subfigure}{.333\textwidth}
\includegraphics[width=.99\textwidth]{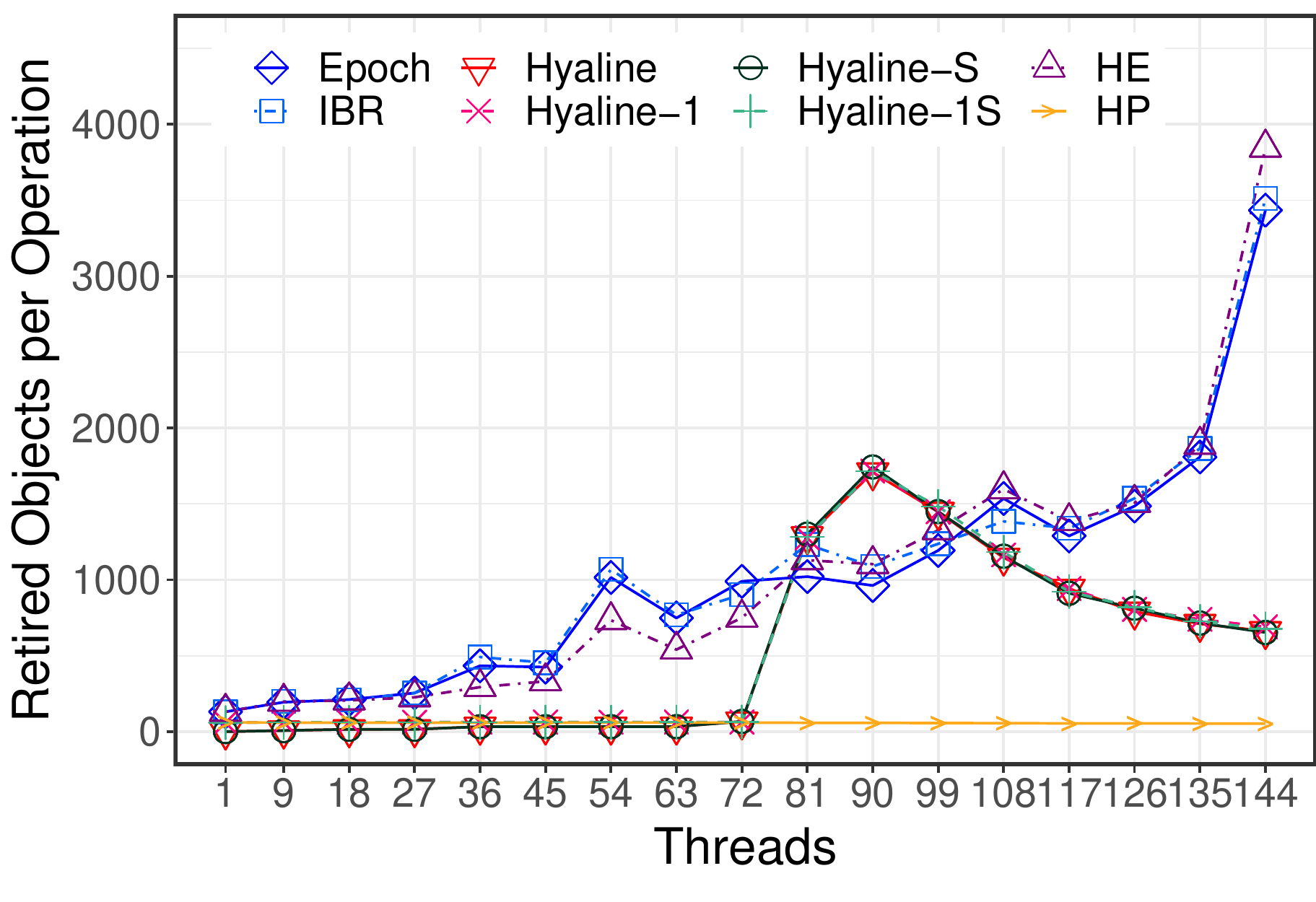}
\vspace{-5pt}
\caption{Michael hash map (read)}
\label{fig:hash_unrec_read}
\end{subfigure}%
\begin{subfigure}{.333\textwidth}
\includegraphics[width=.99\textwidth]{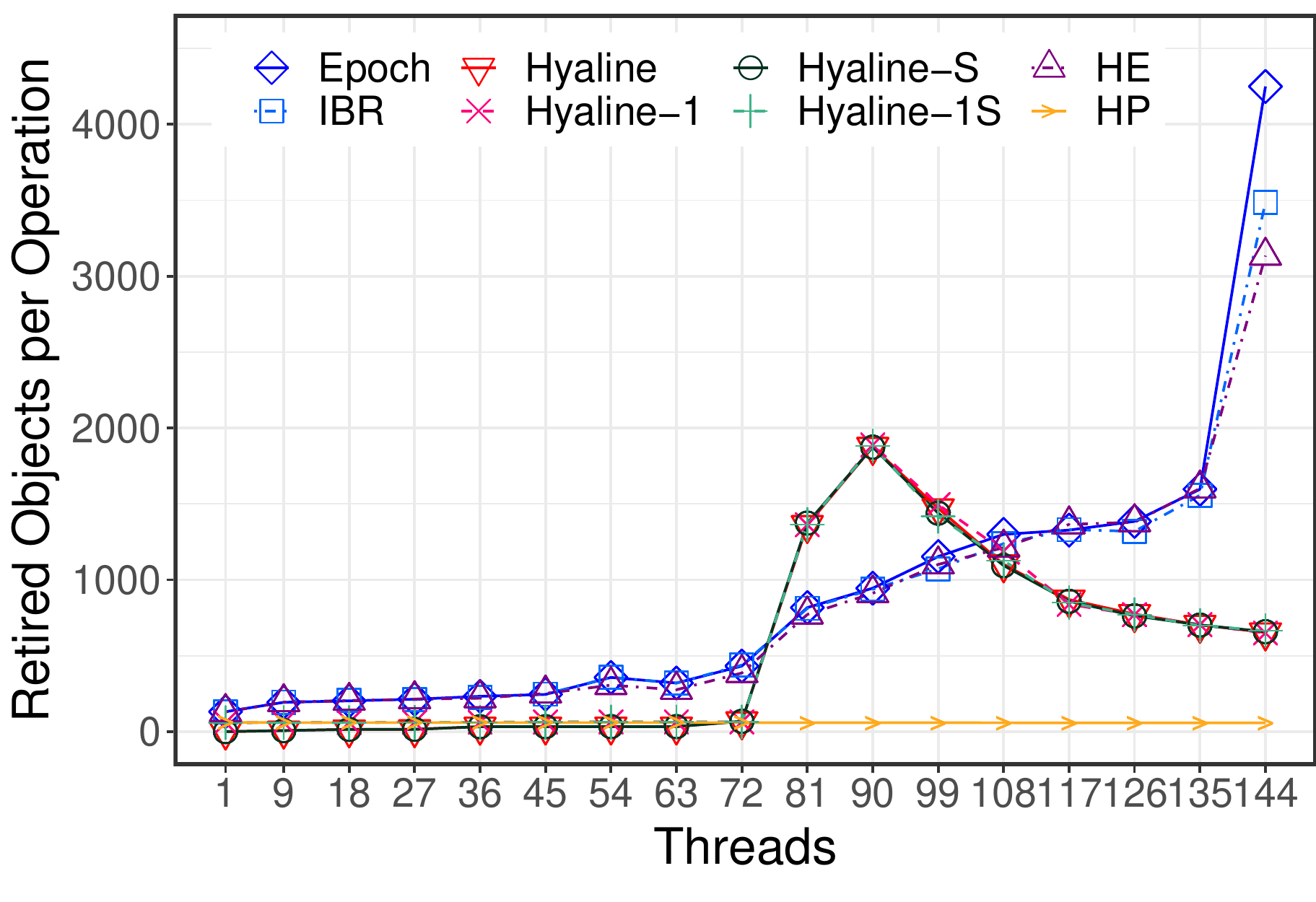}
\vspace{-5pt}
\caption{Natarajan \& Mittal tree (read)}
\label{fig:natarajan_unrec_read}
\end{subfigure}%
\caption{Average number of unreclaimed objects per operation (lower is better).}
\label{fig:unrec}
\end{figure*}

Note that the actual throughput can exceed \textit{No MM} as it can be faster to recycle old objects. As memory deallocation slows down due to a number of factors, including number of freed objects, any memory reclamation scheme can also become objectively slower than \textit{No MM}.

We use both a \textit{write}-intensive workload (50\% \textit{insert}, 50\% \textit{delete}), which stresses reclamation techniques through a large number of insertions and deletions, as well as \textit{read}-dominated workload (90\% \textit{get}, 10\% \textit{put}), which represents a more reclamation-unbalanced and yet common scenario.

For each data point, the experiment starts by prefilling the data structure with 50,000 elements and runs 10 seconds.
Each thread then randomly performs the aforementioned operations. The key used in each operation is randomly chosen from the range of 0 to 100,000 with equal probability. We run the experiment 5 times and report the average.

\begin{figure*}[ht]
\begin{subfigure}{.333\textwidth}
\includegraphics[width=.99\textwidth]{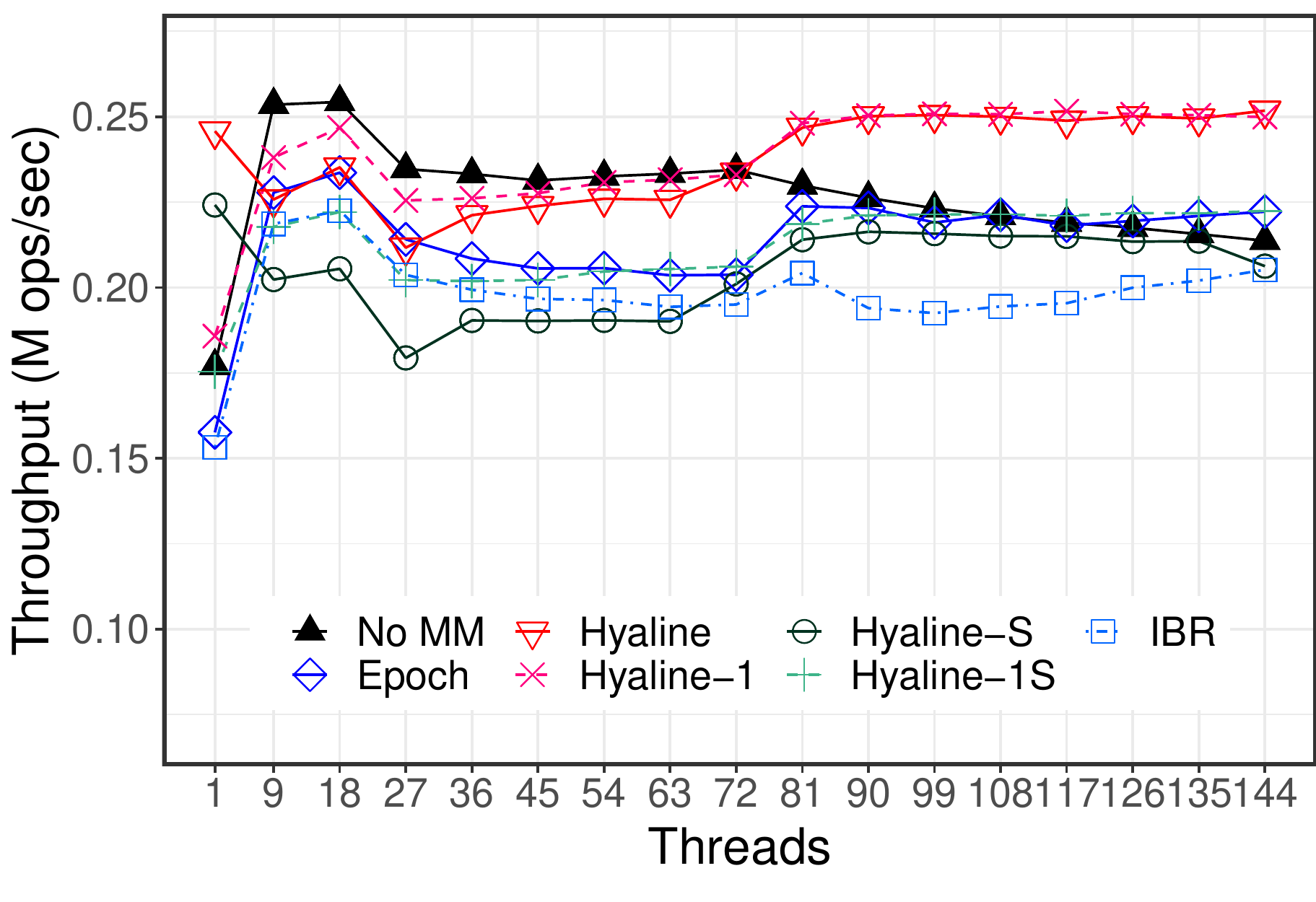}
\vspace{-5pt}
\caption{Throughput (write)}
\label{fig:bonsai_thru}
\end{subfigure}%
\begin{subfigure}{.333\textwidth}
\includegraphics[width=.99\textwidth]{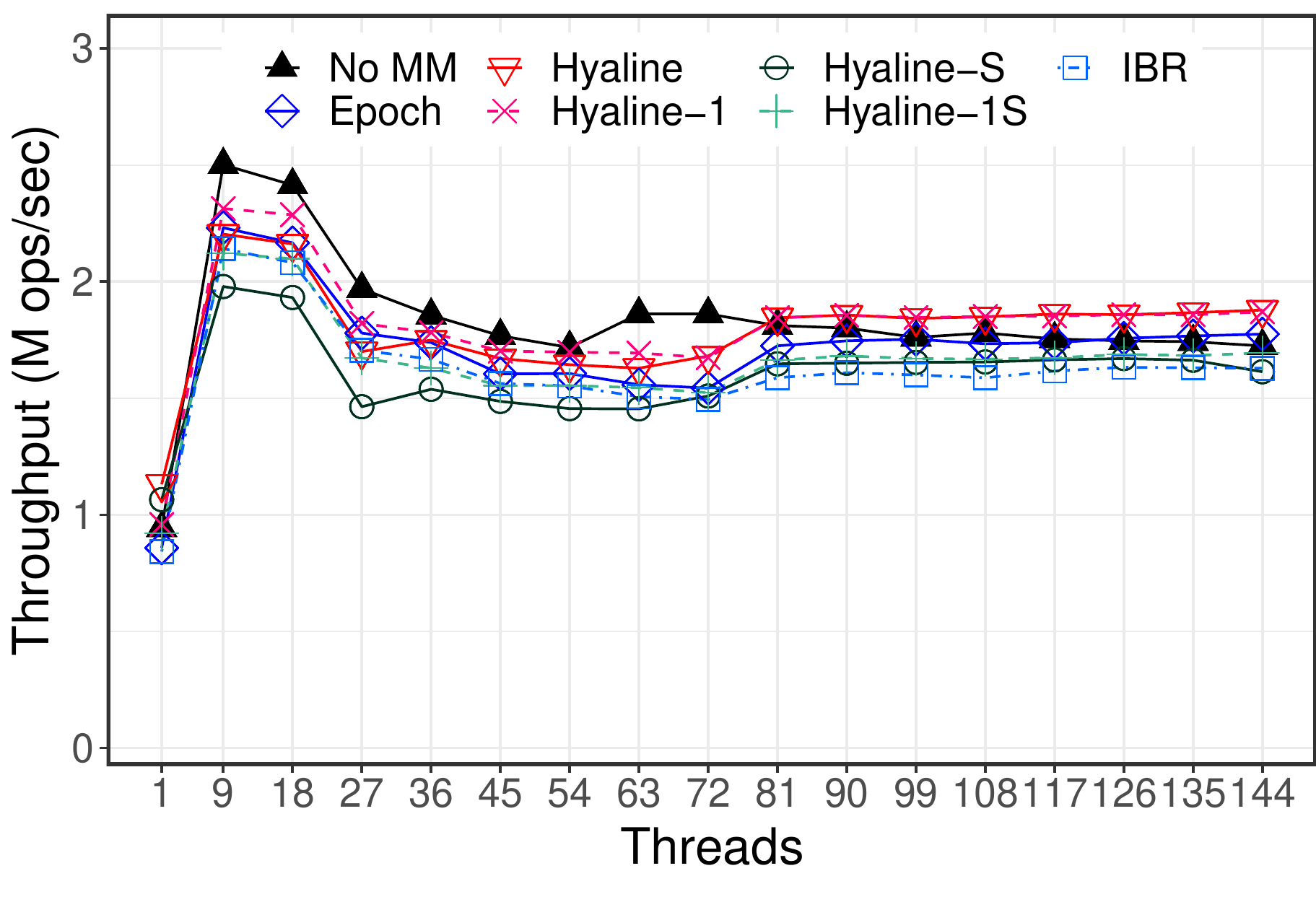}
\vspace{-5pt}
\caption{Throughput (read)}
\label{fig:bonsai_thru_read}
\end{subfigure}%
\begin{subfigure}{.333\textwidth}
\includegraphics[width=.99\textwidth]{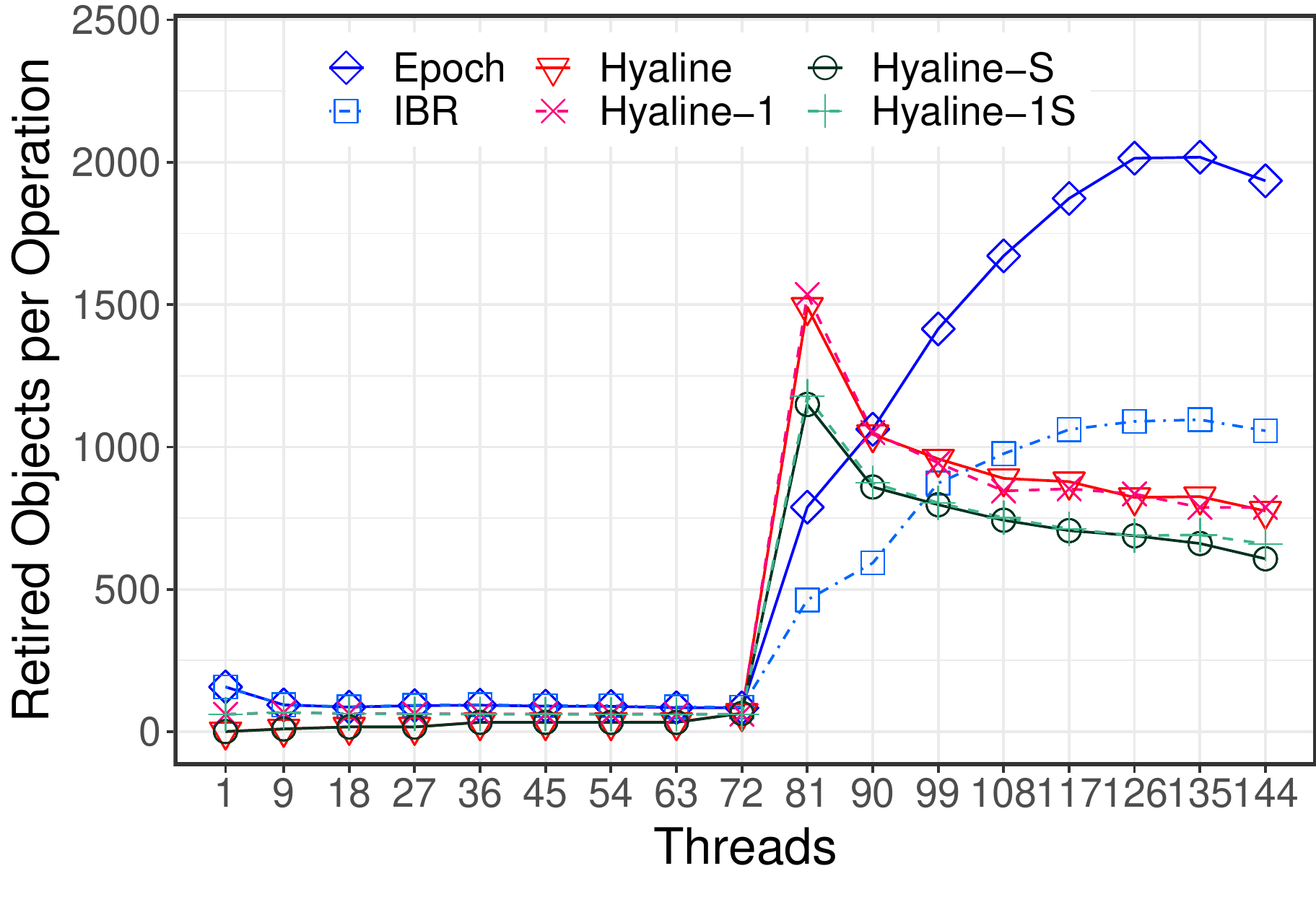}
\vspace{-5pt}
\caption{Unreclaimed objects (write)}
\label{fig:bonsai_unrec}
\end{subfigure}%
\caption{Bonsai tree. (Unreclaimed objects for \textbf{read} and \textbf{write} are nearly identical.)}
\label{fig:bonsai}
\end{figure*}

Reclamation algorithms need to be adjusted to gain good performance.
Although this process is tricky, we found more or less reasonable
parameters for a fair comparison such that existing algorithms achieve the highest possible throughput while retaining as much of memory efficiency as possible.
For our machine, benchmark parameters $\mathit{epochf}=150$ and $\mathit{emptyf}=120$ appear to be optimal for existing schemes in this regard.
$\mathit{epochf}$ amortizes the frequency of
epoch counter increments for Epoch, IBR, and HE. $\mathit{emptyf}$ reduces other
overheads for all algorithms, e.g., amortizing the frequency of list traversals.
For \hyaline{}
and \hyaline{}-S, we cap the number of slots, $k$, at $128$ (the next power of 2 of the number of cores). 
All variants use batches of at least 64 and at most $k+1$ nodes (as required by
the \hyaline{} algorithms).

Figure~\ref{fig:list_thru} shows the throughput of (sorted) Linked-list, which is a good example of an unbalanced workload since 
operations are slow and dominated by  the long traversal required to find an element (even in the write-intensive scenario).
Figure~\ref{fig:list_unrec}, which shows the average number of retired but not-yet-reclaimed
objects per operation (allows us to estimate how fast memory is reclaimed), demonstrates that \hyaline{} has excellent memory efficiency, which is much better than that of Epoch, HE, or IBR.
This validates our claim that \hyaline{}'s efficiency is
better in unbalanced settings.
All \hyaline{} variants also have marginally higher throughput than the other schemes. Although HP is also efficient, its throughput is visibly worse due to so many memory barriers incurred while traversing the list. Similar trends are also observed for the read-dominated case (Figures~\ref{fig:list_thru_read} and~\ref{fig:list_unrec_read}).

Figure~\ref{fig:hash_thru} shows hash map's throughput using the write-intensive workload. Hash map operations are very short and significantly stress memory reclamation systems. Because operations are short, HP's performance does
not degrade as much as in the prior test.
The gap between \textit{No MM} and memory reclamation techniques substantially
increases as the number of threads begins to exceed the number of cores. However, all \hyaline{}
variants still perform well after 72 threads. (The gap between \hyaline{}
and Epoch gets as large as \textbf{2x} for 81 threads.) For a smaller number
of threads, retirement in \hyaline{} can be slightly more expensive than in Epoch and IBR.
The average number of unreclaimed objects (Figure~\ref{fig:hash_unrec}) for all \hyaline{} variants
is comparable to HP and smaller than that of IBR, HE, or Epoch before the oversubscribed
scenario (not visible due to a smaller scale). Although it temporarily increases afterwards,
the corresponding throughput is also substantially higher than that of other schemes. Hence, one possible explanation for this increase is that \hyaline{} simply allocates and reclaims more objects (compared to other schemes) in the first place.
Since this workload is already very balanced, \hyaline{} also does
not get any extra benefit due to reclamation balancing.
Hash map's results are somewhat more interesting for the read-dominated case (Figures~\ref{fig:hash_thru_read} and~\ref{fig:hash_unrec_read}), where \hyaline{} is more memory efficient than IBR, HE, or Epoch.
\hyaline{}'s throughput remains very high, even in oversubscribed scenarios.

Natarajan \& Mittal tree (Figures~\ref{fig:natarajan_thru},~\ref{fig:natarajan_unrec},~\ref{fig:natarajan_thru_read}, and~\ref{fig:natarajan_unrec_read}) shows
similar trends to that of hash map. HP is slower due to longer operations. Throughput gains of \hyaline{} are more visible here.
With respect to memory efficiency, we see the same benefit in the read-dominated workload as in hash map.
Before oversubscription, \hyaline{}'s efficiency is close to HP's.

Figures~\ref{fig:bonsai_thru} and~\ref{fig:bonsai_thru_read} show Bonsai tree's throughput. HP and HE are not implemented due to
the complexity of the tree rotation operations~\cite{IntervalBased}, for which
the number of local pointers cannot be determined in advance.
Throughput drops for all schemes as we approach 18 per-socket cores, most
likely due to over-socket contention~\cite{IntervalBased}.
\hyaline{} and \hyaline{}-1 achieve the best performance and steadily outperform Epoch by $\approx$10\%.
All robust schemes presented for this benchmark (IBR, \hyaline{}-S, 
and \hyaline{}-1S) have similar performance; it is worse than their non-robust
counterparts due to increased number of pointer dereferences. The
number of unreclaimed objects (Figures~\ref{fig:bonsai_unrec}) for \hyaline{} and \hyaline{}-S is mostly smaller than that of Epoch and IBR, respectively.

\subsubsection*{Snapshots}
Snapshots also impact memory utilization. For 144 threads and 16 local (concurrently reserved) pointers in HP and HE, snapshots \textbf{additionally require 2.5 MB}. Although this size depends on the number of threads, and the number of local pointers is typically smaller, we present this figure to give some perspective. For example, even if the number of unreclaimed objects is as high as $4000$, and each object is 128 bytes, we still use less than \textbf{0.5 MB}. To retain the same evaluation methodology as in prior works, our results above disregard snapshot overheads. However, for snapshot-based schemes (IBR, HP, and HE), snapshots alone
can create more memory inefficiency than the scheme itself, a fact rarely acknowledged in prior works.

\section{Related Work}
\label{sec:related}

A number of approaches for safe memory reclamation (SMR) were proposed over
the last two decades.

Most SMR approaches are either pointer- or epoch- based. Pointer-based
techniques such as hazard pointers (HP)~\cite{hazardPointers} are typically
fine-grained and track every accessed object.
Unfortunately, this approach degrades 
performance as pointer dereferencing incurs additional overheads, such
as memory writes and barriers.
Pass-the-buck~\cite{Herlihy:2002:ROP:645959.676129,Herlihy:2005:NMM:1062247.1062249} has a similar model.
Drop the anchor~\cite{Braginsky:2013:DAL:2486159.2486184} is designed specifically for linked-lists and outperforms hazard pointers, but the approach is not directly applicable to other data structures.
Optimistic Access~\cite{Cohen:2015:EMM:2755573.2755579} and Automatic Optimistic Access (AOA)~\cite{Cohen:2015:AMR:2814270.2814298} are more universal techniques, but they require  data structures to be written in a ``normalized form.'' FreeAccess~\cite{Cohen:2018:DSD:3288538.3276513} drops this requirement and implements a garbage collector. FreeAccess, however,
needs to divide a program
into read-only and write-only periods, which makes it impossible to directly use certain operations such as \textit{swap}.
OrcGC~\cite{OrcGC}, another fully lock-free garbage collector, achieves
good performance but is still slower in some tests than HP. 

In epoch-based reclamation (EBR)~\cite{epoch1,epoch2}, which is based on the read-copy-update (RCU)~\cite{Mckenney01read-copyupdate} paradigm, objects are marked with the current
epoch value at the time they are retired.
A memory object is deallocated only when all thread
reservations are ahead of the object's retire epoch and no thread
can reach it. Stamp-it~\cite{Poter:2018:BAS:3210377.3210661} extends EBR to guarantee $O(1)$ reclamation cost but is not robust
and requires per-thread control blocks. It
extends EBR by using a
doubly-linked list, and requires
ABA tags~\cite{Herlihy:2008:AMP:1734069}.
Stamp-it squeezes 17-bit tags directly into control block pointers, but for ABA safety, it is better to use
larger tags and double-width CAS.

The hazard eras (HE) approach~\cite{hazardEra} attempts to reconcile
EBR with HP: HE is robust, but uses ``eras''
(i.e., epochs) instead of pointer addresses to accelerate the algorithm.
When allocating memory objects, they are tagged with the \textit{birth} era, and
when objects are retired, they are tagged with the \textit{retire} era.
Lifecycles of objects are controlled by these eras. Similarly to HP's API model, indices must be assigned to all
accessed objects in HE.
A subsequent work~\cite{WFE} makes HE wait-free.
Interval-based reclamation (IBR)~\cite{IntervalBased} employs the idea of
birth and retire eras but forgoes the need to explicitly assign indices making its API model, especially in its 2GE-IBR variant, reminiscent of EBR and easier to use.

Some approaches exploit OS support.
PEBR~\cite{PEBR} relies on OS tricks~\cite{OSTRICKS} to avoid extra memory barriers, which makes it intrusive to execution environments.
DEBRA+~\cite{debra}, NBR~\cite{NBR}, and QSense~\cite{Balmau:2016:FRM:2935764.2935790} improve EBR to make it
robust, but they rely on OS signals or scheduler support.
They are robust but not in a fully lock-free manner as typical OSs such
as Linux inevitably use locks. ThreadScan~\cite{threadScan} and Forkscan~\cite{forkScan} are other examples of schemes that rely on signals.

Another approach that is simple to implement but has a high overhead is lock-free reference counting (LFRC)~\cite{refCount,Valois:1995:LLL:224964.224988}. In this approach, each object is associated with a reference count. An object can be safely reclaimed when the reference count reaches zero. The reference count is updated with every access, which converts read-accesses into write-accesses with a memory barrier. This significantly impacts performance.
\hyaline{} uses a completely different approach, wherein
objects are accessed without modifying reference counters. Since
active threads are tracked only in the list of \emph{retired} objects,
\hyaline{}'s overhead is significantly smaller.

Some approaches rely on hardware transactional memory (HTM) to speed up reference counting~\cite{htmRefCount} using HTM transactions. Another approach~\cite{readAsHTM} executes any read operation on the data structure as an HTM transaction. When a conflict occurs in a concurrent
thread that reclaims memory, the transaction is aborted. Some other approaches~\cite{pageFaultRec} optimize performance by using page protection mechanisms which issue a page fault that forces a global memory barrier.

\section{Conclusion}

We presented \hyaline{}, a new lock-free algorithm for
safe memory reclamation. \hyaline{} uses LL/SC or
double-width CAS, which are available on most modern architectures. A specialized
\hyaline{}-1 algorithm uses single-width CAS and can be implemented on all architectures. We also presented  \hyaline{}-S and \hyaline{}-1S extensions,
which bound memory usage even in the presence of stalled threads.
Compared to other common approaches, the \hyaline{} schemes balance the reclamation workload due to their underlying asynchronous nature of reclamation. This often manifests in improved memory
efficiency without sacrificing performance.

All \hyaline{} schemes are
suitable for environments where threads are created and
deleted dynamically:
threads are ``off-the-hook'' as soon as they \textit{leave}
and do not need to check retirement lists afterwards. 
\hyaline{} and \hyaline{}-S are fully transparent as
they need not explicitly register or unregister threads; they
can allocate a fixed number of slots
roughly corresponding to the number of cores and still support
any number of threads.
\hyaline{}-1 and \hyaline{}-1S are less transparent in this sense but can be implemented everywhere.

\hyaline{} schemes do not take snapshots, which can help reduce memory footprints as the number of threads grow.

We tested all \hyaline{} versions on x86(-64),
ARM32/64, PowerPC, and MIPS architectures. For these architectures, all
\hyaline{} variants exhibit very high throughput on various data structures, and ensure that the number of retired, but not-yet-reclaimed objects is small. 
We presented results for x86-64, a ubiquitous architecture.
\hyaline{}'s benefits are especially visible in certain read-dominated
workloads. Moreover, in oversubscribed scenarios, \hyaline{} obtains up to \textbf{2x}
throughput gain over other algorithms, including EBR.

\section*{Availability}
We provide code for the modified benchmark and all Hyaline variants at \url{https://github.com/rusnikola/lfsmr}.

The arXiv version of the paper is available at \url{https://arxiv.org/abs/1905.07903}.

\section*{Acknowledgments}
A preliminary version of the algorithm previously appeared as a brief announcement at PODC '19~\cite{hyalineBA}.

We would like to thank the anonymous reviewers and our
shepherd Tony Hosking for their insightful comments and
suggestions, which helped greatly improve the paper.
We also thank Mohamed Mohamedin for helping with experiments for an early
version of the algorithm.

This work is supported in part by AFOSR under grants FA9550-15-1-0098
and FA9550-16-1-0371 and ONR under grants N00014-18-1-2022 and
N00014-19-1-2493.

%% Bibliography
\bibliography{lockfree}

\appendix
\newcommand{\llscone}{%
\begin{algorithm2e}[H]
\setcounter{AlgoLine}{10}
\Fn{\textbf{head\_t} dFAA(\textbf{head\_t} *Head, \textbf{int} RefAddend)} {
\tcp{Increment HRef}
\tcp{HPtr remains intact}
\Do {\Not SC(Head.HRef, Value)}{
Old.HRef = LL(Head.HRef);\

Old.HPtr = Load(Head.HPtr);\

Value = Old.HRef + RefAddend;\
}
\Return Old;\
}
\end{algorithm2e}
}

\newcommand{\llsctwo}{%
\begin{algorithm2e}[H]
\Fn{\textbf{bool} dCAS\_Ptr(\textbf{head\_t} *Head, \textbf{head\_t} Expect, \textbf{head\_t} New)} {
Old.HPtr = LL(Head.HPtr);\

Old.HRef = Load(Head.HRef);\

\lIf{Old $\neq$ Expect} {\Return False}
\Return SC(Head.HPtr, New.HPtr);\
}
\Fn{\textbf{bool} dCAS\_Ref(\textbf{head\_t} *Head, \textbf{head\_t} Expect, \textbf{head\_t} New)} {
Old.HRef = LL(Head.HRef);\

Old.HPtr = Load(Head.HPtr);\

\lIf{Old $\neq$ Expect} {\Return False}
\Return SC(Head.HRef, New.HRef);\
}
\end{algorithm2e}
}

\begin{figure*}
\begin{subfigure}{.5\textwidth}
\llsctwo
\end{subfigure}%
\begin{subfigure}{.5\textwidth}
\llscone
\end{subfigure}%
\vspace{-8pt}
\caption{Single-width LL/SC version. Both HRef and HPtr are in the same reservation granule.}
\label{alg:llsc}
\end{figure*}

\section{\hyaline{}(-S) for Single-width LL/SC}
\label{sec:llsc}

ARM32 MPCore+ and ARM64~\cite{ARM:manual} implement
double-width LL/SC. \hyaline{} can be easily 
implemented on such architectures. However, MIPS~\cite{MIPS:manual}
(until very recently) and PowerPC~\cite{PPC:manual} implement only single-width
LL/SC. We now describe an approach that we have used to successfully implement (in inline assembly) and stress-test
\hyaline{}.

Typical implementations of LL/SC have certain restrictions, e.g., memory
writes between LL and SC are prohibited. Memory reads, however, are still allowed for architectures such as PowerPC and MIPS.
Although architectural manuals do not always mention it explicitly,
the LL reservation granularity is typically larger than just a single CPU
word, i.e., an entire L1 cache line~\cite{Sarkar:2012:SCP:2254064.2254102} or even larger.
This creates ``false sharing,'' where concurrent LL/SC on
unrelated variables residing in the same granule causes SC to
succeed only for one variable.
In \hyaline{}, the {\tt [HRef,HPtr]} tuple needs
to be atomically loaded. However, with the exception of a special case
when {\tt HRef} goes to $0$ in \textit{leave}, we only update 
one or the other variable at a time. We use this observation to implement
\hyaline{} for single-width LL/SC. We place two variables in the same
reservation granule by aligning the first variable on the double-word
boundary so that only one LL/SC
pair succeeds at a time. An ordinary \textit{load} operation between
LL and SC loads the other variable.
To prevent reordering of LL and load, we construct an
artificial data dependency that acts as an implicit barrier for load.
For the SC to
succeed, the other variable from the granule must also be intact.

In Figure~\ref{alg:llsc}, we present a replacement of the
\textit{FAA} operation used by \textit{enter} and two versions of double-width CAS
replacements. \textit{retire} uses the version that modifies a pointer.
\textit{leave} first uses the version that modifies a reference counter.
We keep {\tt HPtr} intact in \textit{leave} even if {\tt HRef}
goes to $0$. Then the other version of CAS sets {\tt HPtr} to
{\tt Null} if the object is still unclaimed by any concurrent
\textit{enter}. Double-width load atomicity is guaranteed only when
SC succeeds. Our algorithm tolerates single-width atomicity for CAS failures.
Weak CAS (with sporadic failures) is tolerated in all cases
other than $\mathit{HRef}=0$.
For that case, a strong version is created by looping SC;
single-width atomicity for failures is acceptable as false negatives
are impossible unless concurrent threads modify {\tt Head} and claim the object -- i.e., $\mathit{HRef}$ is no longer $0$.

\newcommand{\algtrim}{%
\begin{algorithm2e}[H]
\Fn{\textbf{handle\_t} trim(\textbf{int} slot, \textbf{handle\_t} handle)} {
Head = Heads[slot]\tcp*{Do not alter head}

Curr = Head.HPtr\;
\If (\tcp*[f]{Non-empty list}) {Curr $\neq$ handle} {
    traverse(Curr->Next, handle);\
}
\Return Curr\tcp*{Returns a new handle}
}
\end{algorithm2e}
}

\begin{figure}
\begin{subfigure}{.5\textwidth}
\algtrim
\end{subfigure}%
\vspace{-5pt}
\caption{\hyaline{}'s trimming.}
\label{alg:trim}
\end{figure}

\section{Trimming}
\label{sec:trimming}

As mentioned in Section~\ref{sec:vectorized-smr}, the cost
of {\it enter} and {\it leave} is relatively small as long as there is no
slot contention.

Trimming provides an alternative way to avoid contention
while keeping the number of slots very small as in the simpler single-list version.
Although, \textit{trim} is
harder to use manually (we avoid it in Section~\ref{sec:eval}),
it may be practical when several operations in a row
take place on the data structure.
\textit{trim} can also be called automatically when building
\textit{quiescent-state} style memory reclamation as in~\cite{epoch2}.
Logically, \textit{trim} is equivalent to \textit{leave} followed by
\textit{enter}, but avoids the unwanted alteration of {\tt Head}.
\textit{trim} allows a thread to indicate that previously retired batches
(by any thread) are safe to delete from its perspective.
When performing multiple operations on data structures, we can call
\textit{trim} in lieu of \textit{leave} for the current operation
and \textit{enter} for the operation that follows.

The use of \textit{trim} should not be confused with a case
when \textit{enter} and \textit{leave} encapsulate several data structure operations.
In the latter case, none of the retired batches can be reclaimed
until \textit{leave} is called. In contrast, \textit{trim} dereferences
previously retired batches such that they are timely reclaimed.

In Figure~\ref{alg:trim}, we present pseudocode for trimming which extends the algorithm presented in Figures~\ref{alg:hyaline}, \ref{alg:hyaline1}, and ~\ref{alg:hyalines}.
\textit{trim} dereferences batches by traversing thread's retirement list but skips the
very first
node and \texttt{Head} entirely; it updates a per-thread \emph{handle} to reflect a shortened tail. (Note that with \textit{trim}, \hyaline{}-1 and \hyaline{}-1S can have handles which are not necessarily {\tt Null}.)
Because \textit{trim} does not update {\tt Head}, it is critical to bound
the length of retirement lists that need to be traversed,
forcing \textit{leave} and \textit{enter} afterwards.
Otherwise, if deallocation is slow, one unlucky thread
can get stuck in a state where it is traversing an ever-changing list and
deallocating more and more batches retired by other threads.

\end{document}